\title{The Nakamura numbers \\ for computable simple games\thanks{%
Preprint, Social Choice and Welfare (2008) 31:621--640,
\href{http://dx.doi.org/10.1007/s00355-008-0300-5}{doi:10.1007/s00355-008-0300-5}
}}
\author{Masahiro Kumabe \\
Kanagawa Study Center, The University of the Air\\
2-31-1 Ooka, Minami-ku, Yokohama
232-0061, Japan 
\and 
H.  Reiju Mihara\thanks{Corresponding author. \protect\\
\emph{URL:} \url{http://econpapers.repec.org/RAS/pmi193.htm} (H.R. Mihara).
}\\
Graduate School of Management, Kagawa University\\
Takamatsu 760-8523, Japan}
\date{February 2008}
\newcommand{\A}{\mathcal{A}}
\newcommand{\N}{\mathbb{N}}
\newcommand{\REC}{\mathrm{REC}}
\newcommand{\CRec}{\mathrm{CRec}}
\newcommand{\p}{\mathbf{p}}
\newcommand{\pref}{\succ_i^\p}
\newcommand{\profile}{\mbox{$(\pref)_{i\in N}$}}
\newcommand{\xprefy}{\{\,i\in N:{x\pref y}\,\}}
\newcommand{\qed}{\enspace\enspace \vrule height 6pt width5pt
depth2pt}
\newtheorem{theorem}{Theorem}
\newtheorem{prop}[theorem]{Proposition}
\newtheorem{lemma}[theorem]{Lemma}
\newtheorem{claim}[theorem]{Claim}
\newtheorem{definition}{Definition}
\newtheorem{example}{Example}
\newtheorem{remark}{Remark}
\newenvironment{proof}{\emph{Proof}.}{\qed\bigskip}
\begin{document} 

\maketitle 

\begin{abstract} 
The Nakamura number of a simple game plays a critical role in preference aggregation (or multi-criterion ranking):
the number of alternatives that the players can always deal with rationally is less than this number.
We comprehensively study the restrictions that various properties for a simple game impose on its Nakamura number.
We find that a computable game has a finite Nakamura number greater than three only if it is proper, nonstrong, and 
nonweak, regardless of whether it is monotonic or whether it has a finite carrier.
The lack of strongness often results in alternatives that cannot be strictly ranked.

\emph{Journal of Economic Literature} Classifications:  C71, C69, D71.

\emph{Keywords:}  
Nakamura number, voting games, core, Turing computability, axiomatic method, multi-criterion decision-making.
\end{abstract}

\pagebreak

\section{Introduction}



The \emph{Nakamura number} plays a critical role in the study of 
preference aggregation rules with \emph{acyclic} social preferences.\footnote{
 \citet{banks95}, \citet{truchon95}, and \citet{andjiga-m00} are recent contributions to the literature.
Earlier papers on acyclic rules can be found in \citet{truchon95} and \citet{austensmith-b99}.
Note that acyclicity of a preference is necessary and sufficient for the existence of a maximal element
on every finite subset of alternatives.
When the weak social preferences are required to be \emph{transitive}, 
we are back in Arrow's difficult setting (\citeyear{arrow63}).} %
Consider a \emph{(simple) game}\footnote{Simple games are often referred to as ``voting games'' in the literature.
In this paper, we sometimes call them ``games'' for short.}---a coalitional game that assigns either 0 or~1 to each coalition:
those assigned~1 are winning coalitions and those assigned~0 are losing coalitions.
Combining the game with a set of alternatives and a profile of individual preferences, one obtains a 
\emph{simple game with (ordinal) preferences}, from which one can
derive a social preference (dominance relation).
Nakamura's theorem (\citeyear{nakamura79}) gives a necessary and sufficient condition for a simple game with
preferences to have a nonempty \emph{core} (the set of maximal elements of the social preference) for all profiles: 
the number of alternatives is less than a certain number 
(the smallest number of winning coalitions that collectively form an empty intersection), 
called the \emph{Nakamura number} of the simple game.
Thus the greater the Nakamura number for a given game is, the larger the set of alternatives is
from which the rule (mapping from profiles to social preferences) can always find a maximal element.


\citet[Theorem~17]{kumabe-m08jme} extend Nakamura's theorem to their framework
and apply it to \emph{computable} simple games.
They show that every (nonweak) computable game has a finite Nakamura number.
This implies that under the preference aggregation rule based on a computable game,
the number of alternatives that the set of players can deal with rationally is restricted by this number.
(Remark~\ref{nakamura-formal} gives a formal discussion of this result.)

We are therefore interested in the question of how large the Nakamura number can be.
In fact, \citet[Proposition~15]{kumabe-m08jme} show that every integer $k\ge 2$ is 
the Nakamura number of some computable game.
Of course, a large Nakamura number can be attained only by satisfying or violating certain properties for simple games.
For example, the Nakamura number of a nonproper game, which admits two complementing winning coalitions,
is at most~2 (Lemma~\ref{nonproper-nakamura2}).


In this paper, we study the restrictions that various properties (axioms) for a simple game impose 
on its Nakamura number.
We restrict our attention to the computable simple games and classify them into thirty-two ($2^5$) classes
 in terms of their \emph{types} (with respect to monotonicity, properness,\footnote{\label{nonproper-foundation}%
While simple games are often defined so that they are monotonic and proper, 
we allow simple games to be nonmonotonic or nonproper for completeness.
We can derive such games from a strategic game form, giving a justification (strategic foundation) 
for including them.
For example, we obtain a \emph{nonproper} game from the game form~$g$,
defined by $g(0,0)=g(0,1)=g(1,0)=0$ and $g(1,1)=1$,
which describes the unanimous voting rule.
Each player is effective for the set $\{0\}$ in the sense that by choosing~$0$,  she can force the outcome to be
in the set.
Then the simple game consisting of the coalitions that are effective for $\{0\}$ is nonproper.
For another example,
we obtain in Remark~\ref{strategic-foundation} an important class of \emph{nonmonotonic} games
from a certain class of game forms.} %
 strongness, and nonweakness)
and finiteness (existence of a finite carrier).
Table~\ref{summary} summarizes the results.  
For example, a type 5 $(+-++)$
(monotonic, \emph{non}proper, strong, nonweak)
computable game has Nakamura number equal to~2, whether it is finite or infinite.\footnote{%
Strictly speaking, we only assert in this paper that the numbers in each entry in the table 
are \emph{not ruled out}; we are not much interested in asserting
that \emph{every} entry not indicated ``none'' contains a game in which an empty coalition is losing.
However, those who accept the results in \citet{kumabe-m07csg64}
will find the latter assertion acceptable.
For most entries, the examples given in the paper cited suffice.
For the other entries, we need to modify the examples---which we do,
with the exception of a few entries (footnote~\ref{notallexamplesgiven}).} %
Note that the Nakamura number for a \emph{weak} game is infinite by definition.

\renewcommand{\thetable}{\arabic{table}}
\begin{table}
\caption{Possible Nakamura Numbers for Computable Games} %
\begin{center}
\begin{tabular}{rllcrll} \hline\hline
Types & Finite & Infinite & & Types & Finite & Infinite  \\
 \cline{1-3} \cline{5-7}
 $1 (++++)$ & 3 & 3 & & $9 (-+++)$ & 2 & 2 \\
 $2 (+++-)$ & $+\infty$  & none & & $10 (-++-)$ & none & none \\
 $3 (++-+)$ & $\ge 3$ & $\ge 3$ & & $11 (-+-+)$ & $\ge 2$ & $\ge 2$ \\
 $4 (++--)$ & $+\infty$ & $+\infty$ & & $12 (-+--)$ & $+\infty$ & $+\infty$ \\
 $5 (+-++)$ & 2 & 2  & & $13 (--++)$ & 2 & 2 \\
 $6 (+-+-)$ & none  & none & & $14 (--+-)$ & none & none \\
 $7 (+--+)$ & 2 & 2 & & $15 (---+)$ & 2 &  2 \\
 $8 (+---)$ & none  & none & & $16 (----)$ & none & none \\
\hline \\  %
\end{tabular}
\parbox{110mm}{\footnotesize
Possible Nakamura numbers are given in each entry, assuming that
an empty coalition is losing (so that the Nakamura number is
at least~2).
The types are defined by the four conventional axioms: 
monotonicity, properness, strongness, and nonweakness. For example,
the entries corresponding to Type~2 $(+++-)$ indicates that among the computable, monotonic ($+$), proper ($+$), strong ($+$), weak ($-$, because \emph{not} nonweak) games, finite ones have a Nakamura number equal to $+\infty$
and infinite ones do not exist.}  %
\end{center}\label{summary}
\end{table}

We make two observations from Table~\ref{summary}.
First, \emph{a nonweak computable game has a Nakamura number greater than~3 only if 
it is proper, and \emph{non}strong}
(i.e., either of type~3 $(++-+)$ or of type~11 $(-+-+)$).\footnote{%
Propositions~\ref{nakamura:type3} and \ref{nakamura:type3inf} state that any Nakamura number $k\ge 3$
is attainable by type 3 finite and infinite games.
Propositions~\ref{nakamura:type11} and \ref{nakamura:type11inf} state that any Nakamura number $k\ge 2$
is attainable by type 11 finite and infinite games.
Remark~\ref{strategic-foundation} gives a strategic foundations for these games.} %
In particular, for the players to be always able to choose a maximal element from at least three alternatives, 
strongness of the game must be forgone (unless the game is dictatorial (type~2)).
The reader should not overlook the importance of the number 3 in the above observation.
It is the Nakamura number of the majority game with an odd number of (at least three) players.
To deal with three or more alternatives rationally (though it is generally impossible to rank them \citep{arrow63})
requires a Nakamura number greater than~3.
Second, as far as computable games are concerned, \emph{a number~$k$ 
is the Nakamura number of a \emph{finite} game of a certain type (except type~2) if and only if 
it is that of an \emph{infinite} game of the same type}. 
Restricting games to finite ones does not reduce or increase
 the number of alternatives that the players can deal with rationally.
 
In contrast, if we drop the computability condition, these observations are no longer true.
A ``nonprincipal ultrafilter,'' which is noncomputable and has an infinite Nakamura number \citep{kumabe-m08jme},
serves as a counterexample to both:
It is a nonweak game with a Nakamura number greater than~3, but it is strong.  
It is a type~1 \emph{infinite} game with a Nakamura number different from~3, 
the Nakamura number of type~1 \emph{finite} games.
In fact, one can use ultrafilters not only
to find a maximal element from any finite set of alternatives (regardless of the size),
but also
to rank (while preserving the transitivity of the weak social preference) any number of alternatives 
\citep[Section~5]{kumabe-m08jme}.
This fact explains why nonprincipal ultrafilters are used for resolving Arrow's impossibility~(\citeyear{arrow63}).
The lack of computability of nonprincipal ultrafilters, however, implies that
such resolutions are impractical \citep{mihara97et}.

\bigskip

The rest of the Introduction gives a background briefly.
Much of it is fully discussed in \citet{kumabe-m08jme}.

One can think of simple games as representing voting methods or multi-criterion decision rules.
They have been central to the study of social choice \citep[e.g.,][]{peleg02hbscw}.
For this reason, the paper can be viewed as a contribution to the foundations of 
\emph{computability analysis of social choice}, 
which studies algorithmic properties of social decision-making.\footnote{%
This literature includes \citet{kelly88}, 
\citet{lewis88}, \citet{bartholdi-tt89vs,bartholdi-tt89cd},
\citet{mihara97et,mihara99jme,mihara04mss}, \citet{kumabe-m08jme,kumabe-m07csg64},
and \citet{tanaka07}.}  %

The importance of computability in social choice theory would be unarguable.
First, the use of the language by social choice theorists suggests the importance: 
for example, \citet{arrow63} uses words such as ``\emph{process or rule}'' or
``\emph{procedure}.'' 
Second, there is a normative reason:
computability of social choice rules formalizes the notion of ``due process.''\footnote{%
\citet{richter-w99jme} give further justifications for studying 
computability-based economic theories.}

We consider an infinite set of ``players.''
Roughly speaking, a simple game is \emph{computable} if there is a Turing program (finite algorithm) that can decide
from a description (by integer) of each coalition whether it is winning or losing.
Since each member of a coalition should be describable, we assume that the
set $N$ of (the names of) players is countable, say, $N=\N=\{0,1,2, \dots \}$.
Also, we describe coalitions by a Turing program that can decide for the name of each player whether
she is in the coalition.  Since each Turing program has its code number (G\"{o}del number),
the coalitions describable in this manner are describable by an integer, as desired.
(Such coalitions are called \emph{recursive} coalitions.)

\citet{kumabe-m08jme} give three interpretations of countably many \emph{players}:
(i)~generations of people extending into the indefinite future,
(ii) finitely many \emph{persons} facing countably many \emph{states} of the world~\citep{mihara97et}, and
(iii)~attributes or \emph{criteria} in multi-criterion decision-making.\footnote{
Legal decisions involve~(iii).  \citet{kumabe-m07csg64} discuss the formation of legal precedents, in which
an infinite number of criteria are potentially relevant but only finitely many of them are actually cited.} %
We can naturally re-interpret the preference aggregation problem
 (which provides motivation for studying the Nakamura number)
as a \emph{multi-criterion ranking} problem, for example.
In multi-criterion ranking, each criterion ranks finitely many alternatives;
we are interested in aggregating those countably many rankings into one (acyclic relation). 
Assuming that the underlying simple game is computable is intuitively plausible
in view of the following consequences:
(i)~each criterion is treated differently;\footnote{%
Computable simple games violate anonymity \citep[Proposition~13]{kumabe-m08jme}.} %
(ii)~whether an alternative has a higher rank than another can be determined by
examine finitely many criteria, 
though how many criteria need to be examined depends on each situation (Proposition~\ref{cutprop}).
The (lack of strongness) observation mentioned above suggests that rational choice from many (at least three)
 alternatives often involves alternatives that cannot be strictly ranked.

\section{Framework} %

\subsection{Simple games}\label{notions}

Let $N=\N=\{0,1,2, \dots \}$ be a countable set of (the names of) 
players.  Any \textbf{recursive} (algorithmically decidable) 
subset of~$N$ is called a \textbf{(recursive) coalition}.

Intuitively, a simple game describes in a crude manner the power 
distribution among \emph{observable} (or describable) coalitions (subsets of players). 
We assume that only \textbf{recursive} coalitions are observable.  
According to \emph{Church's thesis}~\citep{soare87,odifreddi92}, the recursive coalitions 
are the sets of players for which there is an algorithm that can decide for 
the name of each player whether she is in the set.\footnote{\citet{soare87} and \citet{odifreddi92}
give a more precise definition of \emph{recursive sets} as well as detailed discussion of recursion theory.
The papers by \citet{mihara97et,mihara99jme} contain short reviews of recursion theory.} 
Note that \textbf{the class~$\REC$ of recursive 
coalitions} forms a \textbf{Boolean algebra}; that is, it includes $N$ 
and is closed under union, intersection, and complementation.

Formally, a \textbf{(simple) game} is a collection~$\omega\subseteq\REC$ of (recursive) coalitions.
We will be explicit when we require that $N\in \omega$.
The coalitions in $\omega$ are said to be \textbf{winning}.  
The coalitions not in $\omega$ are said to be \textbf{losing}. 
One can regard a simple game as a function from~REC to $\{0,1\}$, assigning the value 1 or 0 to each 
coalition, depending on whether it is winning or losing.

We introduce from the theory of cooperative games a few basic 
notions of simple games~\citep{peleg02hbscw,weber94}.
A simple game $\omega$ is said to be 
\textbf{monotonic} if for all coalitions $S$ and $T$, the 
conditions $S\in \omega$ and $T\supseteq S$ imply $T\in\omega$.  
$\omega$ is \textbf{proper} if for all recursive coalitions~$S$, 
$S\in\omega$ implies $S^c:=N\setminus S\notin\omega$.  $\omega$ is 
\textbf{strong} if for all coalitions~$S$, $S\notin\omega$ 
implies $S^c\in\omega$.  $\omega$ is \textbf{weak} if 
$\omega=\emptyset$ or
the intersection~$\bigcap\omega=\bigcap_{S\in\omega}S$ of the winning coalitions is nonempty.  
The members of $\bigcap\omega$ are called \textbf{veto players}; they 
are the players that belong to all winning coalitions.  
(The set $\bigcap\omega$ of veto players may or may not be observable.)
$\omega$ is \textbf{dictatorial} if there exists some~$i_0$ (called a 
\textbf{dictator}) in~$N$ such that $\omega=\{\,S\in\REC: i_0\in 
S\,\}$.  Note that a dictator is a veto player, but a veto player is 
not necessarily a dictator.
It is immediate to prove the following well-known lemmas: 

\begin{lemma} \label{weakisproper}
If a simple game is weak, it is proper.\end{lemma}

\begin{lemma} \label{strongweakisdic}
A simple game is dictatorial if and only if it is strong and weak.\end{lemma}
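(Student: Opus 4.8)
The plan is to prove both implications directly from the definitions, exhibiting a concrete dictator in the harder direction.

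First I would dispatch the easy implication. Assume $\omega$ is dictatorial with dictator $i_0$, so that $\omega=\{\,S\in\REC: i_0\in S\,\}$. To see that $\omega$ is weak, observe that every winning coalition contains $i_0$ by definition, so $i_0\in\bigcap\omega$ and hence $\bigcap\omega\neq\emptyset$. To see that $\omega$ is strong, take any $S\notin\omega$; then $i_0\notin S$, so $i_0\in S^c$, and since $\REC$ is closed under complementation we have $S^c\in\REC$ with $i_0\in S^c$, giving $S^c\in\omega$.

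For the converse, assume $\omega$ is strong and weak. I would first rule out the degenerate alternative $\omega=\emptyset$ that the definition of weakness permits: if $\omega=\emptyset$, then $N\notin\omega$, so strongness would force $N^c=\emptyset\in\omega$, a contradiction. Hence weakness supplies a veto player $i_0\in\bigcap\omega$, i.e., $i_0$ belongs to every winning coalition, and I claim this $i_0$ is a dictator. The inclusion $\omega\subseteq\{\,S:i_0\in S\,\}$ is immediate from the veto property. For the reverse inclusion, suppose $i_0\in S$ but, toward a contradiction, $S\notin\omega$; strongness then yields $S^c\in\omega$, whence the veto property forces $i_0\in S^c$, contradicting $i_0\in S$. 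Thus $i_0\in S$ implies $S\in\omega$, and therefore $\omega=\{\,S\in\REC: i_0\in S\,\}$ is dictatorial.

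There is no genuinely hard step here; the argument is a routine unwinding of the definitions. The only points demanding any care are the disposal of the $\omega=\emptyset$ case in the definition of weakness (which strongness excludes) and the use of the contrapositive of strongness together with the veto property to establish the nontrivial inclusion. One could alternatively invoke Lemma~\ref{weakisproper} to obtain properness for free, but this is not needed for the argument above.
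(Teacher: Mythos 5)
Your proof is correct; the paper offers no written proof of this lemma (it is stated as ``immediate to prove'' and well-known), and your argument---handling the $\omega=\emptyset$ degenerate case via strongness, then showing the veto player supplied by weakness must be a dictator because strongness forces every coalition containing him to be winning---is precisely the standard unwinding of the definitions that the paper intends.
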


A \textbf{carrier} of a simple game~$\omega$ is a coalition $S\subset N$ 
such that
\[ 
T\in\omega \iff S\cap T\in \omega
\]
for all coalitions~$T$.
When a game~$\omega$ has a carrier~$T$, we often restrict the game on $T$ and
identify $\omega$ with $\omega |T:=\{S\cap T: S\in \omega \}$.
We observe that if $S$ is a carrier, then so is any coalition $S'\supseteq S$.
Slightly abusing the word, we sometimes say a game is \textbf{finite} if it has a finite carrier; 
otherwise, the game is \textbf{infinite}.

The \textbf{Nakamura number} $\nu(\omega)$ of a game~$\omega$ is 
the size of the smallest collection of winning coalitions having empty intersection
\[
\nu(\omega)=\min \{\#\omega': \textrm{$\omega' \subseteq \omega $ and $\bigcap\omega'=\emptyset$} \}
\]
if $\bigcap\omega=\emptyset$ (i.e., $\omega$ is nonweak); 
otherwise, set $\nu(\omega)=+\infty$, which is understood to be greater than any cardinal number.
In computing the Nakamura number for a game, it suffices to look only at the subfamily of minimal winning coalitions,
\emph{provided that the game is finite}.  If the game is infinite, we cannot say so since minimal winning coalitions
 may not exist.
 
Extending and applying the well-known result by \citet{nakamura79},
\citet{kumabe-m08jme} show that
computability of a game entails a restriction on the number of alternatives
that the set of players (with the coalition structure described by the game) can deal with rationally.
The following remark gives a formal presentation of that result, adapted to the present framework.

\begin{remark} \label{nakamura-formal}%
Let $X$ be a (finite or infinite) set of \emph{alternatives}, with cardinal number $\#X\geq 2$.
Let $\A$ be the set of \emph{(strict) preferences}, i.e., 
\emph{acyclic} (for any finite set $\{x_1, x_2, \ldots, x_m\}\subseteq X$,
if $x_1 \succ x_2$, \ldots, $x_{m-1} \succ x_m$, then $x_m \not\succ x_1$;
in particular, $\succ$ is asymmetric and irreflexive) binary relations~$\succ$ on $X$. 
A \emph{profile} is a list 
$\p=\profile \in\A^{N}$ of {individual preferences} $\pref$
such that $\xprefy \in\REC$
for all~$x$, $y\in X$.

A \emph{simple game with (ordinal) preferences} is a list $(\omega, X, \p)$ of
a simple game~$\omega$ in which an empty coalition is losing,
 a set $X$ of alternatives, and a profile $\p$. %
Given a simple game with preferences, 
we define the dominance relation (social preference)~$\succ^\p_\omega$ 
by $x\succ^\p_\omega y$ if and only if there is a winning coalition
$S\in\omega$ such that $x\pref y$ for all $i\in S$.
Note that the mapping $\succ_\omega$ 
from profiles $\p$ to dominance relations $\succ^\p_\omega$ defines an \emph{aggregation rule}.
The \emph{core} $C(\omega, X, \p)$ of the simple game with preferences is the set of undominated alternatives:
\[
C(\omega, X, \p)=\{x\in X: \textrm{$\not\!\exists y\in X$ such that $y\succ^\p_\omega x$}\}.
\]
\citet[Corollary~19]{kumabe-m08jme} show that \emph{if $\omega$ is computable and nonweak,
then there exists a finite number $\nu$ (the Nakamura number~$\nu(\omega)$) such that
the core $C(\omega, X, \p)$ is nonempty for all profiles~$\p$
if and only if $\#X<\nu$}.\end{remark}

\subsection{The computability notion}
\label{comp:notion}

To define the notion of computability for simple games, we first introduce  
an indicator for them.  In order to do that, 
we first represent each recursive coalition by a characteristic index ($\Delta_0$-index).  
Here, a number $e$~is a \textbf{characteristic index} for a coalition~$S$
if $\varphi_e$ (the partial function computed by the Turing program with code number~$e$)
is the characteristic function for~$S$. 
Intuitively, a characteristic index for a coalition describes
the coalition by a Turing program that can decide its membership.
The indicator then assigns the value 0 or 1 to each 
number representing a coalition, depending on whether the 
coalition is winning or losing.  When a number does not represent a 
recursive coalition, the value is undefined.

Given a simple game $\omega$, its \textbf{$\delta$-indicator} is the partial 
function~$\delta_\omega$ on~$\N$ defined by
\[
	\delta_\omega(e)=\left\{
	    \begin{array}{ll}
		1 & \mbox{if $e$ is a characteristic index for a recursive
	set in $\omega$},  \\
		0 & \mbox{if $e$ is a characteristic index for a recursive
	set not in $\omega$},  \\
		\uparrow & \mbox{if $e$ is not a characteristic
	index for any recursive set}.
	\end{array}
	\right.
\]
Note that $\delta_\omega$ is well-defined since each $e\in\N$ can be a 
characteristic index for at most one set.

\medskip

We now introduce the notion of \emph{($\delta$)-computable} games.
We start by giving an intuition.
A number (characteristic index) representing a coalition
(equivalently, a Turing program that can decide the membership of the coalition)
is presented by an inquirer to the aggregator (planner), 
who will compute whether the coalition is winning or not.
The aggregator cannot know a priori which indices will possibly be presented to her.
So, \emph{the aggregator should be ready to give an answer whenever a characteristic index for 
some recursive set is presented to her}.
This intuition justifies the following condition of computability.\footnote{
\citet{mihara04mss} also proposes a stronger condition, \emph{$\sigma$-computability}.
We discard that condition since it is too strong a notion of computability (Proposition~3 of that paper;
for example, even \emph{dictatorial} games are not $\sigma$-computable).} %

\begin{description} \item[($\delta$)-computability] $\delta_\omega$ has 
an extension to a partial recursive function.  
\end{description}

\section{Preliminary Results} %
\label{prelim}

In this section, we give a sufficient condition and a necessary condition for a game to be computable.

\medskip

\textbf{Notation}.  We identify a natural number~$k$ with the finite 
set $\{0,1,2,\ldots,k-1\}$, which is an initial segment of~$\N$.  
Given a coalition $S\subseteq N$, we write $S\cap k$ to represent the 
coalition $\{i\in S: i<k\}$ consisting of the members of $S$ whose 
name is less than~$k$.  
We call $S\cap k$ the \textbf{$k$-initial segment of $S$}, and view it 
either as a subset of~$\N$ or as the string $S[k]$ of length~$k$ of 0's and 1's 
(representing the restriction of its characteristic function to 
$\{0,1,2,\ldots,k-1\}$).\enspace$\|$

\begin{definition}
Consider a simple game.  A string $\tau$ (of 0's and 1's) of 
length~$k\geq 0$ is \textbf{winning determining} if any 
coalition $G\in\REC$ extending $\tau$ (in the sense that $\tau$ is an 
initial segment of $G$, i.e., $G\cap k=\tau$) is winning; $\tau$ is 
 \textbf{losing determining} if any 
coalition $G\in\REC$ extending $\tau$ is losing.  
A string is \textbf{determining} if it is either winning determining or losing determining.
A string is \textbf{nondetermining} if it is not determining.\end{definition}

The following proposition restates a sufficient condition \citep[the ``if'' direction of Theorem~4]{kumabe-m08jme} 
for a game to be computable.
In particular, \emph{finite games are computable}.
The proposition can be proved easily:

\begin{prop} [{\citet{kumabe-m07csg64}}] \label{delta0det2}
Let $T_0$ and $T_1$ be recursively enumerable sets of (nonempty) strings such that
any coalition has an initial segment in $T_0$ or in $T_1$ but not both.
Let $\omega$ be the simple game defined by $S\in \omega$ if and only if
 $S$ has an initial segment in~$T_1$.
Then $T_1$ consists only of winning determining strings, $T_0$ consists only of losing determining strings, 
and $\omega$ is $\delta$-computable.\end{prop}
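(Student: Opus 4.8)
The plan is to read off the two ``determining'' claims directly from the definitions together with the two-part hypothesis on the pair $\{T_0,T_1\}$, and then to establish $\delta$-computability by exhibiting a dovetailing search, through the enumerations of $T_0$ and $T_1$, that computes a partial recursive extension of $\delta_\omega$. The hypothesis splits naturally into a \emph{covering} part (every coalition has an initial segment in $T_0$ or in $T_1$) and a \emph{disjointness} part (no coalition has initial segments in both), and the whole proof amounts to seeing that these two parts do exactly the two jobs the argument needs.

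First I would dispose of the two determining claims. For $\tau\in T_1$: if $G\in\REC$ extends $\tau$, then $\tau$ is an initial segment of $G$ lying in $T_1$, so by the defining condition for $\omega$ we get $G\in\omega$; hence every such $G$ is winning and $\tau$ is winning determining. For $\tau\in T_0$: if $G\in\REC$ extends $\tau$, then $\tau$ is an initial segment of $G$ in $T_0$, so by the disjointness part of the hypothesis $G$ can have \emph{no} initial segment in $T_1$; thus $G\notin\omega$, so every such $G$ is losing and $\tau$ is losing determining.

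For $\delta$-computability I would describe an algorithm for a partial recursive $\psi$ extending $\delta_\omega$. On input $e$, dovetail the computations of $\varphi_e(0),\varphi_e(1),\dots$ with the enumerations of $T_0$ and $T_1$; whenever the initial segment $\sigma_k=\varphi_e(0)\cdots\varphi_e(k-1)$ has been fully computed as a string of $0$'s and $1$'s and is found in the enumeration of $T_1$ (resp.\ $T_0$), halt and output $1$ (resp.\ $0$). Since $T_0$ and $T_1$ are recursively enumerable this search is effective, so $\psi$ is partial recursive. To check that $\psi$ extends $\delta_\omega$, suppose $e$ is a characteristic index of a recursive set $S$, so $\varphi_e$ is total and $0/1$ valued. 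If $S\in\omega$ then $S$ has an initial segment in $T_1$ and, by disjointness, none in $T_0$; the search therefore halts and every match it can find is in $T_1$, giving $\psi(e)=1=\delta_\omega(e)$. If $S\notin\omega$ then $S$ has no initial segment in $T_1$, so by the covering part it has one in $T_0$, and the search outputs $0=\delta_\omega(e)$. On an $e$ representing no recursive set $\delta_\omega(e)$ is undefined, so it places no constraint on the extension.

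The main obstacle---really the only step needing care---lies in this last construction, and it is conceptual rather than computational: one must verify that the two halves of the hypothesis line up precisely with the two things the search requires. The covering part is what forces termination on every genuine characteristic index (where $\varphi_e$ is total), and the disjointness part is what rules out a coalition ever yielding conflicting $T_0$- and $T_1$-matches, so that the first match located reports the correct winning/losing status. I would also flag explicitly that we are only asked for an \emph{extension} of $\delta_\omega$, which is what makes the possible divergence of the search on non-indices (where $\varphi_e$ is partial, or takes a value outside $\{0,1\}$) completely harmless.
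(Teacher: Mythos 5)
Your proposal is correct, and it is essentially the intended argument: the paper itself omits the proof (attributing the proposition to \citet{kumabe-m07csg64} and remarking only that it ``can be proved easily''), and the standard proof is exactly what you give---the two determining claims read off from the disjointness and the definition of $\omega$, and $\delta$-computability via a dovetailed search through the enumerations of $T_0$ and $T_1$ for a fully computed initial segment of $\varphi_e$, with covering guaranteeing halting on genuine characteristic indices and disjointness guaranteeing the answer found is the right one. Your closing remark that divergence on non-indices is harmless because only an \emph{extension} of $\delta_\omega$ is required is precisely the point that makes the construction work.
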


The following proposition \citep[Proposition~3]{kumabe-m08jme}
gives a necessary condition for a game to be computable:

\begin{prop}[{\citet{kumabe-m08jme}}] \label{cutprop}
Suppose that a $\delta$-computable simple game is given.  
\textup{(i)}~If a coalition $S$ is winning, then it has an initial segment $S[k]$ (for some $k\in \N$) 
that is winning determining.
\textup{(ii)}~If $S$ is losing, then it has an initial segment $S[k]$ that is losing determining.
\end{prop}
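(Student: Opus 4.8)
The plan is to prove part~(i) by contradiction and to obtain part~(ii) by an entirely symmetric argument, so I would concentrate on~(i). Fix, by $\delta$-computability, a partial recursive function $\psi$ extending $\delta_\omega$. Suppose $S\in\omega$ (winning) but that \emph{no} initial segment $S[k]$ is winning determining. Unwinding the definition, this means that for every length $k$ there is a recursive \emph{losing} coalition extending the string $S[k]$; note also that $S$ itself is a recursive \emph{winning} coalition extending every $S[k]$. These two facts supply, for each $k$ and each target value, a recursive coalition of the ``wrong'' status with prescribed $k$-initial segment, which is the raw material for the diagonalization.

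The core of the argument is to manufacture a single recursive coalition $G$ whose winning/losing status disagrees with the value $\psi$ assigns to an index for $G$. The circularity here---$G$ must be defined in terms of an index $e$ for $G$---is resolved by Kleene's recursion theorem, which I would invoke to obtain in advance an index $e$ that the construction may mention. I would then define $\varphi_e$ on a player $x$ as follows: run $\psi(e)$ for $x$ steps; if it has not yet halted, output $S(x)$, so that $G$ provisionally copies $S$; if it has halted, say first at stage $s_0\le x$ with value $v$, set $k=s_0$ and instead output $H(x)$, where $H$ is a \emph{fixed} recursive coalition extending $S[k]$ chosen to be losing when $v=1$ (this exists because $S[k]$ is not winning determining) and chosen to be $S$ itself (hence winning) when $v=0$.

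First I would check that $\varphi_e$ is total and that the two branches cohere, so that $e$ is genuinely a \emph{characteristic} index of a recursive coalition $G$---this is what lets me apply the hypothesis $\psi(e)=\delta_\omega(e)$. Coherence holds because $H$ agrees with $S$ on $\{0,\dots,k-1\}$, so the provisional values output before halting is detected match the values $H$ prescribes; consequently $G=S$ if $\psi(e)$ never halts, and $G=H$ otherwise. The contradiction is then immediate in every case: if $\psi(e)$ never halts then $G=S$ is winning, forcing $\delta_\omega(e)=1$ and hence convergence of $\psi(e)$; if $\psi(e)$ halts with $v=1$ then $G=H$ is losing, so $\delta_\omega(e)=0\ne v$; and if it halts with $v=0$ then $G=S$ is winning, so $\delta_\omega(e)=1\ne v$. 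Part~(ii) runs identically after exchanging the roles of winning and losing: the negation of ``losing determining'' now furnishes a \emph{winning} extension of each $S[k]$, and $S$ itself serves as the losing witness.

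I expect the main obstacle to be conceptual rather than computational: setting up the self-reference cleanly so that the recursion theorem applies, and, above all, verifying that the spliced object $G$ really is recursive with $e$ a legitimate characteristic index. If that verification is sloppy, the crucial equation $\psi(e)=\delta_\omega(e)$ cannot be invoked and the whole diagonalization collapses. The step-dependent amount of simulation of $\psi(e)$ (running $x$ steps on input $x$) is the device that keeps $\varphi_e$ total while still detecting halting in the limit, and I would take care that the ``switch'' from copying $S$ to copying $H$ is consistent across all inputs.
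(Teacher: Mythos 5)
Your overall theme (diagonalize against the partial recursive extension $\psi$ of $\delta_\omega$ via the recursion theorem, copying $S$ until $\psi$ commits itself) is the right one, and your handling of the divergence case and the $v=0$ case is correct. But the crucial case $v=1$ contains a genuine circularity that the recursion theorem cannot absorb. To apply the recursion theorem you must first exhibit a \emph{partial recursive} function $g(e,x)$ realizing your instructions, and only then take a fixed point $e$ with $\varphi_e(x)=g(e,x)$. Your instructions, however, say: ``once $\psi(e)$ is seen to halt at stage $s_0$ with value $1$, output $H(x)$ where $H$ is a losing recursive coalition extending $S[s_0]$.'' The hypothesis (negation of the conclusion) only asserts that such an $H$ \emph{exists} for each $k$; it provides no effective map $k\mapsto(\mbox{characteristic index of }H_k)$. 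Since $s_0$ is not known until after $e$ is fixed, $H$ cannot be hard-coded either, and without an effective selection the displayed behaviour is not a partial recursive $g$, so there is no fixed point to take. This is not a cosmetic issue: if you instead pass $H$ as a parameter $h$ and take fixed points $e(h)$ uniformly, the argument still fails, because nothing prevents $\psi(e(h))$ from halting only at a stage $s_0>$ the length of agreement between $\varphi_h$ and $S$, in which case the spliced set $G$ is a hybrid whose status is unknown and no contradiction results; and replacing $H$ by a runtime search over indices $e'$ with $\psi(e')=0$ founders on ``garbage'' candidates ($\psi$ may take the value $0$ on indices of partial functions, and following one makes $G$ partial, which is consistent with $\psi(e)=1$).

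What repairs the proof is to make every switching target a \emph{canonically computable} coalition rather than a merely hypothesized one. Concretely, one argues in two stages, as in the Kreisel--Lacombe--Shoenfield continuity theorem (whose instance this proposition essentially is). First, for coalitions of the form $\sigma{*}S$ (agree with the string $\sigma$ on $[0,|\sigma|)$, copy $S$ beyond), a characteristic index is computable from $\sigma$, so $\psi$ applied to it is total and decides the status of $\sigma{*}S$; a recursion-theorem construction that, upon detecting $\psi(e)\!\downarrow\,=1$ at stage $s_0$, \emph{searches over strings} $\sigma\supseteq S[s_0]$ for one with $\sigma{*}S$ losing (a decidable test, so no garbage can arise) shows that some $k$ has: every $\sigma\supseteq S[k]$ gives $\sigma{*}S$ winning. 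Second, one bootstraps: if some recursive $G\supseteq S[k]$ were losing, a second fixed-point construction copies $G$ until $\psi(e)$ halts and then, if the value is $0$, switches to the tail of $S$, producing $G[s_0']{*}S$, which is winning by stage one; each case then yields a contradiction. Without this restructuring---or some equivalent device supplying the switching target effectively---your single-step argument does not go through.
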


\section{The Main Results} 
\label{}

We classify computable games into thirty-two ($2^5$) classes as shown in~Table~\ref{summary},
in terms of their \textbf{(conventional) types} 
(with respect to the conventional axioms of monotonicity, properness, strongness, and nonweakness)
and finiteness (existence of a finite carrier).
Among the sixteen types, five (types 6, 8, 10, 14, and 16) contain no games;
also, the class of type~$2$ infinite games is empty 
(since type 2 games are dictatorial).\footnote{%
These results, also found in \citet{kumabe-m07csg64}, are immediate from Lemmas \ref{weakisproper} and~\ref{strongweakisdic}.}%

We therefore have only $(16-5)\times 2-1=21$ classes of games to be checked.
For each such class, we find the set of possible Nakamura numbers.
We do so, whenever important, by constructing a game in the class having a particular Nakamura number,
unless the example given in~\citet{kumabe-m07csg64} suffices.\footnote{%
Some examples in \citet{kumabe-m07csg64} violate the condition that $\emptyset$ is losing, 
which we impose in this paper.
In this paper, we omit examples of games with a small Nakamura number 
when the construction is based on the details of the paper cited.
Specifically, we relegate examples of a type 9 infinite game and a type 13 infinite game
to Appendix~B.%
\label{notallexamplesgiven}} %

\emph{We only consider games in which $\emptyset$ is losing}.  
Otherwise, the Nakamura number for the game becomes~1---not a very interesting case.
(Also, note that if $\emptyset$ is winning and the game has a losing coalition, then it is \emph{non}monotonic.)

We consider \emph{weak} games first.
Among the weak games, types 2, 4, and 12 are nonempty.\footnote{%
These types, being weak, consist of games in which $\emptyset$ is losing.
\citet{kumabe-m07csg64} give examples of these types of games.} %
By definition, their Nakamura number is infinite.
We have so far examined all the types whose labels are even numbers.

We henceforth consider \emph{nonweak} (hence nonempty by definition) computable games.
\citet[Corollary~16]{kumabe-m08jme} show that
they have finite Nakamura numbers:

\begin{lemma}[{\citet{kumabe-m08jme}}] \label{nakamura-finite}
Let $\omega$ be a computable, nonweak simple game.  
Then, its Nakamura number~$\nu(\omega)$ is finite.\end{lemma}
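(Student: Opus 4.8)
The plan is to show that a computable, nonweak game $\omega$ admits a finite subfamily of winning coalitions with empty intersection, so that the minimum defining $\nu(\omega)$ is taken over a nonempty collection of \emph{finite} cardinalities, hence is itself finite. The key tool is Proposition~\ref{cutprop}, which translates the global property of being winning into the existence of a \emph{finite} initial segment that is winning determining. The strategy is to use these finite winning-determining segments to build, from the (possibly infinite) family of all winning coalitions, a \emph{finite} subfamily that already witnesses the empty intersection guaranteed by nonweakness.

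First I would fix a winning coalition $S_0\in\omega$ (one exists since $\omega\neq\emptyset$ as it is nonweak). By Proposition~\ref{cutprop}(i), $S_0$ has an initial segment $S_0[k_0]$ that is winning determining, meaning every recursive coalition extending this string is winning. The crucial observation is that the winning status of such extensions depends only on the first $k_0$ coordinates, so I can freely modify the membership of large-indexed players while staying inside $\omega$. For each player $j\in N$, I want to produce a winning coalition that \emph{excludes} $j$, so that the intersection over my chosen winning coalitions omits $j$; doing this for enough players will drive the intersection to $\emptyset$.

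Next I would exploit nonweakness, $\bigcap\omega=\emptyset$, to select finitely many winning coalitions whose intersection is already empty. The point is that nonweakness says no player lies in \emph{all} winning coalitions; equivalently, for every player $i$ there is a winning coalition $W_i$ with $i\notin W_i$. Naively this gives infinitely many coalitions $\{W_i\}_{i\in N}$, one per player, and their intersection is empty but the family is infinite. To finitize, I would combine these with the finiteness of the determining segment: using a winning-determining string of length $k_0$, any player named $\geq k_0$ can be excluded ``for free'' by modifying coordinates beyond $k_0$ in a coalition extending that string, so only the finitely many players $0,1,\dots,k_0-1$ need individually chosen witnesses $W_i$. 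Taking $S_0$ (suitably trimmed above $k_0$ to exclude all large players simultaneously) together with these finitely many $W_i$ yields a finite subfamily $\omega'\subseteq\omega$ with $\bigcap\omega'=\emptyset$.

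Once such a finite $\omega'$ is in hand, the definition of the Nakamura number gives $\nu(\omega)\leq \#\omega'<\infty$ directly, completing the argument. The step I expect to be the main obstacle is the finitization: making rigorous the claim that a single winning-determining segment lets me exclude \emph{all} sufficiently-large-indexed players at once, while only finitely many small-indexed players require separate winning witnesses. This requires carefully checking that the coalition obtained by keeping the determining prefix $S_0[k_0]$ and setting all coordinates $\geq k_0$ to $0$ is (a)~recursive, (b)~genuinely winning by the winning-determining property, and (c)~excludes every player $\geq k_0$ so that intersecting it with the $k_0$ finite witnesses $W_0,\dots,W_{k_0-1}$ leaves nothing. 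Everything else—invoking Proposition~\ref{cutprop} and reading off the bound from the definition of $\nu$—is routine.
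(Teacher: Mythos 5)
Your proof is correct: trimming a winning coalition $S_0$ to its winning-determining prefix $S_0[k_0]$ (Proposition~\ref{cutprop}) yields a finite, recursive, winning coalition containing only players below $k_0$, and nonweakness supplies for each such player $i<k_0$ a winning coalition $W_i$ excluding $i$, so $\bigcap\bigl(\{S_0\cap k_0\}\cup\{W_i: i<k_0\}\bigr)=\emptyset$ and $\nu(\omega)\le k_0+1<\infty$. The paper itself gives no proof of Lemma~\ref{nakamura-finite} (it cites \citet{kumabe-m08jme}), but your argument is precisely the one behind the cited result and is the same technique---computability forcing finite winning subcoalitions via Proposition~\ref{cutprop}---that the paper deploys in its proof of Lemma~\ref{strong-nakamura3}.
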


\subsection{Small Nakamura numbers} \label{section:small}

First, the definition of proper games implies the following:\footnote{%
The conditions $\omega\neq \emptyset$ in 
Lemmas~\ref{nonproper-nakamura2} and \ref{nonmon-strong-nakamura} are redundant,
since an empty game is monotonic, proper, nonstrong, and weak, according to \emph{our} definition.
We retain the conditions in parentheses,
since the definitions of these properties are not well-established for an empty game.}

\begin{lemma} \label{nonproper-nakamura2}
Let $\omega$ be a game satisfying $\emptyset \notin\omega$ (and $\omega\neq \emptyset$).
If $\omega$ is nonproper, then $\omega$ is nonweak with $\nu(\omega)=2$.\end{lemma}

Lemma~\ref{nonproper-nakamura2} is equivalent to the assertion that 
a game is proper if its Nakamura number~$\nu(\omega)$ is at least~3.
It does not rule out the possibility that proper games have Nakamura number equal to~2.
Lemma~\ref{nonproper-nakamura2} implies that
the games of types 5, 7, 13, and 15 have Nakamura number equal to~2.
Example~\ref{type13ex15ex} gives examples of type~13 and type~15 finite games.\footnote{
It is easy to show that types 5 and 7 contain games in which $\emptyset$ is losing.
If $\emptyset$ were winning, then by monotonicity the game would consist of all coalitions (a type~5 game).  
Since the examples of types 5 and 7 games in \citet{kumabe-m07csg64} all have losing coalitions,
$\emptyset$ is losing in those games.
The type 15 infinite game in that paper satisfies the condition that $\emptyset$ is losing.
To show that type 13 contains an infinite game in which $\emptyset$ is losing
is more delicate, but can be done (Appendix~B) by modifying the example in that paper.}  

\begin{example} \label{type13ex15ex}
We first give a type~$13$ finite game.
Let $T=\{0, 1,2\}$ be a carrier and let $\omega |T:=\{S\cap T: S\in \omega\}$
consist of $\{0, 1, 2\}$,  $\{1, 2\}$, $\{0\}$, $\{1\}$, $\{2\}$.
The other three coalitions in~$T$ are losing.
Then, $\omega$ is nonmonotonic, nonproper, strong, and nonweak with $\nu(\omega)=2$.

We next give a type~$15$ finite game.
Let $T=\{0, 1,2\}$ be a carrier and let $\omega |T$
consist of $\{0, 1, 2\}$,  $\{1, 2\}$, $\{0\}$, $\{1\}$.
The other four coalitions in~$T$ are losing.
Then, $\omega$ is nonmonotonic, nonproper, nonstrong, and nonweak with $\nu(\omega)=2$.\end{example}

Next, we consider computable \emph{strong} games that are nonweak.
These games have Nakamura numbers not greater than~3:

\begin{lemma} \label{strong-nakamura3}
Let $\omega$ be a computable, strong nonweak game satisfying $\emptyset \notin\omega$.
Then $\nu(\omega)=2$ or $3$.\end{lemma}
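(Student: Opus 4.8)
The plan is to show $\nu(\omega)\le 3$; the lower bound $\nu(\omega)\ge 2$ is immediate, since $\omega$ is nonweak and computable so $\nu(\omega)$ is finite by Lemma~\ref{nakamura-finite}, while $\emptyset\notin\omega$ means no single winning coalition is empty, so at least two coalitions are needed for an empty intersection. The essential point is that strongness \emph{alone} cannot bound the Nakamura number: a nonprincipal ultrafilter is strong and nonweak yet has $\nu=+\infty$. Hence the argument must genuinely use computability, and I would use it for exactly one thing---to extract a single \emph{finite} winning coalition, and thereby a minimal winning coalition. After that, strongness does all the remaining work; notably, neither properness nor monotonicity is needed, which fits the fact that the statement must cover types~1,~5,~9, and~13.

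Concretely, I would first produce a finite winning coalition. Since $\omega$ is nonweak it is nonempty, so pick any $S\in\omega$. By Proposition~\ref{cutprop}(i) it has a winning determining initial segment $S[k]$, and the finite coalition $F=\{i<k:i\in S\}$, whose length-$k$ initial segment is $S[k]$, is therefore winning. Among the winning subsets of $F$ choose one, $M$, of least cardinality. Then $M\neq\emptyset$ (because $\emptyset\notin\omega$), and every proper subset of $M$ is losing by minimality of $\#M$; that is, $M$ is a minimal winning coalition.

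The crux is then a three-coalition construction driven by strongness. Suppose $\#M\ge 2$ and fix distinct $m_1,m_2\in M$. The sets $M\setminus\{m_1\}$ and $M\setminus\{m_2\}$ are proper subsets of $M$, hence losing, so by strongness their (cofinite, thus recursive) complements $V_1=(M\setminus\{m_1\})^c$ and $V_2=(M\setminus\{m_2\})^c$ are winning; and $V_3=M$ is winning. A direct check gives $V_1\cap V_2=N\setminus M$, whence $V_1\cap V_2\cap V_3=(N\setminus M)\cap M=\emptyset$, exhibiting three winning coalitions with empty intersection and yielding $\nu(\omega)\le 3$. The remaining case $\#M=1$, say $M=\{m\}$, is handled separately and gives $\nu(\omega)=2$: nonweakness means $m$ is not a veto player, so some $W\in\omega$ omits $m$, and $\{m\}\cap W=\emptyset$.

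The step I expect to require the most care is the very first one---guaranteeing a finite winning coalition---since this is precisely where computability is indispensable (it is what the ultrafilter lacks) and must be invoked through Proposition~\ref{cutprop} rather than assumed. Once a minimal winning coalition is in hand, the key observation that collapses the naive ``one coalition per element of $M$'' bound down to three is that the complements of the maximal proper subsets $M\setminus\{m_i\}$ are \emph{forced} to be winning by strongness. (Alternatively one could split off the nonproper case via Lemma~\ref{nonproper-nakamura2}, but the construction above avoids any appeal to properness.)
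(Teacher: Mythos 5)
Your proof is correct, and it rests on the same two pillars as the paper's---Proposition~\ref{cutprop} to extract a finite (hence minimal) winning coalition, and strongness to convert the complement of a losing coalition into a winning one---but the combinatorial construction is genuinely different. The paper first argues that nonweakness forces \emph{two distinct} minimal winning coalitions $S_1,S_2$ to exist (a unique minimal winning coalition would be contained in every winning coalition, contradicting nonweakness); it then observes that $S_1\cap S_2$ is losing, applies strongness \emph{once}, and uses the triple $S_1$, $S_2$, $(S_1\cap S_2)^c$. You instead work from a \emph{single} minimal winning coalition $M$: when $\#M\ge 2$ you apply strongness \emph{twice}, to the maximal proper subsets $M\setminus\{m_1\}$ and $M\setminus\{m_2\}$, and use the triple $M$, $(M\setminus\{m_1\})^c$, $(M\setminus\{m_2\})^c$; nonweakness enters only in the degenerate case $\#M=1$, where it supplies a winning coalition avoiding the single member $m$. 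What each route buys: the paper's argument gets the $\nu(\omega)=2$ versus $\nu(\omega)=3$ dichotomy almost for free, according to whether the two minimal winning coalitions intersect; your argument isolates the role of nonweakness, showing that once a minimal winning coalition of size at least two exists, strongness and $\emptyset\notin\omega$ alone already force $\nu(\omega)\le 3$, with nonweakness needed only to exclude the quasi-dictatorial singleton case (compare Lemma~\ref{strongweakisdic}). Your write-up also makes explicit two points the paper leaves implicit: the lower bound $\nu(\omega)\ge 2$, and the fact that the complements used are cofinite, hence recursive coalitions to which strongness may legitimately be applied.
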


\begin{proof} %
Since $\omega$ is computable, by Proposition~\ref{cutprop}, 
every winning coalition has a finite subcoalition that is winning,
which in turn has a minimal winning subcoalition that is winning.
If there is only one minimal winning coalition~$S\neq \emptyset$, 
then the intersection of all winning coalitions is~$S$, which is nonempty;
this violates the nonweakness of ~$\omega$.  So there are at least two
(distinct) minimal winning coalitions $S_1$ and $S_2$ in~$\omega$.
Let $S=S_1\cap S_2$.
$S$ is losing since it is a proper subcoalition of the minimal winning coalition~$S_1$.
Then, since $\omega$ is strong, $S^c$ is winning.
Since $S_1\cap S_2 \cap S^c=S\cap S^c=\emptyset$, we have
$\nu(\omega)\le 3$ by the definition of the Nakamura number.
The assumption that $\emptyset \notin\omega$ rules out $\nu(\omega)=1$.
($\nu(\omega)=2$ if there are distinct minimal winning coalitions $S_1$ and $S_2$ 
such that $S=S_1\cap S_2=\emptyset$; otherwise,
$\nu(\omega)=3$.)\end{proof}

\begin{remark}
The computability condition cannot be dropped from Lemma~\ref{strong-nakamura3}
(a minimal winning coalition may not exist if a winning coalition has no finite, winning subcoalition).
A nonprincipal ultrafilter is a counterexample; it has an infinite Nakamura number.
(See \citet[Sections 2.1 and 4.3]{kumabe-m08jme} for the definition of a \emph{nonprincipal ultrafilter} and 
the observation that  it has no finite winning coalitions and is noncomputable, 
monotonic, proper, strong, and nonweak.)\end{remark}

\begin{lemma} \label{mono-prop-nakamura}
Let $\omega$ be a monotonic proper game satisfying $\emptyset \notin\omega$ and $\omega\neq \emptyset$.
Then $\nu(\omega)\ge 3$.
\end{lemma}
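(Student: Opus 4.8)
The plan is to rule out the two smallest possible values, $\nu(\omega)=1$ and $\nu(\omega)=2$, since by definition $\nu(\omega)$ is either a positive integer or $+\infty$. If $\omega$ is weak, then $\nu(\omega)=+\infty\geq 3$ and there is nothing to prove, so I would assume from the outset that $\omega$ is nonweak and hence that $\nu(\omega)$ is a finite positive integer.

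First I would dispose of the case $\nu(\omega)=1$. By the definition of the Nakamura number, $\nu(\omega)=1$ can hold only if some single winning coalition $S\in\omega$ satisfies $\bigcap\{S\}=S=\emptyset$, that is, only if $\emptyset\in\omega$. The hypothesis $\emptyset\notin\omega$ excludes this, so $\nu(\omega)\geq 2$.

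The substantive step is to rule out $\nu(\omega)=2$, and this is where both \emph{monotonicity} and \emph{properness} enter. Suppose, toward a contradiction, that there are winning coalitions $S_1,S_2\in\omega$ with $S_1\cap S_2=\emptyset$. Then $S_2\subseteq S_1^c$. Since $S_2\in\omega$ and $\omega$ is monotonic, the larger coalition $S_1^c$ is also winning. But then $S_1\in\omega$ and $S_1^c\in\omega$, which contradicts properness. Hence no two winning coalitions can have empty intersection, so $\nu(\omega)\neq 2$.

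Combining these observations yields $\nu(\omega)\geq 3$. I do not expect a genuine obstacle here; the only point requiring a little care is the bookkeeping at the boundary, namely confirming that the definition of $\nu(\omega)$ makes $\nu(\omega)=1$ equivalent to $\emptyset\in\omega$ (recalling the convention that an intersection over the empty subfamily is $N$, so that the empty subfamily never witnesses an empty intersection). The essential idea is simply that monotonicity converts the disjointness $S_1\cap S_2=\emptyset$ into $S_1^c\in\omega$, which properness then forbids.
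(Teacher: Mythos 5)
Your proof is correct and follows essentially the same route as the paper's: the key step in both is that disjoint winning coalitions $S_1, S_2$ give $S_2 \subseteq S_1^c$, so monotonicity makes $S_1^c$ winning, contradicting properness. Your additional bookkeeping ruling out $\nu(\omega)=1$ via $\emptyset\notin\omega$ (and noting the weak case is trivial) is left implicit in the paper but is the same underlying argument.
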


\begin{proof}
Suppose $\nu(\omega)=2$.
Then, there are winning coalitions $S$, $S'$ whose intersection is empty.
That is $S'\subseteq S^c$.
By monotonicity, $S^c$ is winning, implying that $\omega$ is not proper.\end{proof}

\begin{lemma} \label{nonmon-strong-nakamura}
Let $\omega$ be a nonmonotonic strong game satisfying $\emptyset \notin\omega$ (and $\omega\neq \emptyset$).
Then $\omega$ is nonweak with $\nu(\omega)= 2$.
\end{lemma}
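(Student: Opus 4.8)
The plan is to exhibit two winning coalitions whose intersection is empty; this single construction simultaneously establishes nonweakness and the bound $\nu(\omega)\le 2$, while the hypothesis $\emptyset\notin\omega$ supplies the matching lower bound $\nu(\omega)\ge 2$. Unlike Lemma~\ref{strong-nakamura3}, no computability assumption should be needed here, since nonmonotonicity hands us the required coalitions directly.

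First I would unpack nonmonotonicity: there exist coalitions $S$ and $T$ with $S\in\omega$, $T\supseteq S$, and yet $T\notin\omega$. Next I would apply strongness to the \emph{losing} coalition $T$: since $T\notin\omega$, its complement $T^c$ is winning, so $T^c\in\omega$. The crucial observation is then that $S\cap T^c=\emptyset$, because $S\subseteq T$ forces every member of $S$ to lie inside $T$, hence outside $T^c$. Thus $S$ and $T^c$ form two winning coalitions with empty intersection.

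From here the conclusion is immediate. Because $S\cap T^c=\emptyset$, we get $\bigcap\omega\subseteq S\cap T^c=\emptyset$, so $\bigcap\omega=\emptyset$; combined with $\omega\neq\emptyset$ (by hypothesis), this means $\omega$ is nonweak. The same disjoint pair witnesses $\nu(\omega)\le 2$ by the definition of the Nakamura number. Finally, since $\emptyset\notin\omega$, no single winning coalition has empty intersection, which rules out $\nu(\omega)=1$; therefore $\nu(\omega)=2$.

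There is essentially no obstacle in this argument: it is a two-line combination of the defining witnesses. The only point requiring a moment's care is recognizing that the nested coalitions $S\subseteq T$ furnished by nonmonotonicity \emph{already} yield a disjoint winning pair once strongness converts the losing superset $T$ into the winning complement $T^c$. In particular, I would not need to invoke minimal winning coalitions or Proposition~\ref{cutprop} at all, in contrast to the proof of the (monotonicity-agnostic) Lemma~\ref{strong-nakamura3}.
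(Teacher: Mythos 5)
Your proposal is correct and follows essentially the same route as the paper's own proof: nonmonotonicity supplies a winning $S$ inside a losing $T$, strongness turns $T^c$ into a winning coalition disjoint from $S$, and this pair yields nonweakness and $\nu(\omega)=2$. The only difference is that you spell out the bookkeeping (why $\bigcap\omega=\emptyset$ and why $\emptyset\notin\omega$ excludes $\nu(\omega)=1$) that the paper leaves implicit.
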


\begin{proof}
Since nonempty $\omega$ is nonmonotonic, 
there exist a winning coalition $S$ and a losing coalition $S'$ such that
$S\cap S'^c=\emptyset$.  
This means that the Nakamura number is 2, since $S'^c$ is winning by strongness of $\omega$.\end{proof}

Lemma~\ref{strong-nakamura3} and Lemma~\ref{mono-prop-nakamura}
imply that type~$1$ games have a Nakamura number equal to~3.
Lemma~\ref{nonmon-strong-nakamura}
implies that type~$9$  games have a Nakamura number equal to~2.
Proposition~\ref{type1ex} and Example~\ref{type9ex} give examples of these games:\footnote{%
We can also give an example of an infinite, computable, type~$9$ game (Appendix~B).
It rests on the details of the construction in \citet{kumabe-m07csg64}.}

\begin{prop}\label{type1ex}
There exist finite, type~$1$ (i.e., monotonic proper strong nonweak) games and 
infinite, computable, type~$1$ games.
\end{prop}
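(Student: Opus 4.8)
The plan is to treat the two assertions separately, since the finite case is elementary while the infinite case requires a recursion-theoretic construction.

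For the finite case, I would exhibit the three-player majority game. Fix the carrier $T=\{0,1,2\}$ and declare $S\in\omega$ iff $|S\cap T|\ge 2$. First I would check $\emptyset\notin\omega$ (since $|\emptyset\cap T|=0$), and then verify the four type~1 axioms directly from the cardinality count: monotonicity is immediate because $S\subseteq S'$ gives $|S\cap T|\le|S'\cap T|$; properness holds because $|S\cap T|\ge 2$ forces $|S^c\cap T|\le 1$; strongness holds because $|S\cap T|\le 1$ forces $|S^c\cap T|\ge 2$; and nonweakness follows since $\{0,1\}$, $\{0,2\}$, $\{1,2\}$ are all winning with empty intersection. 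Computability is automatic because the game has a finite carrier (finite games are computable). This settles the finite type~1 claim; note that Lemmas~\ref{strong-nakamura3} and~\ref{mono-prop-nakamura} then pin its Nakamura number at~$3$.

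For the infinite case, the natural tool is Proposition~\ref{delta0det2}: I would try to produce recursively enumerable sets $T_0$ (losing determining) and $T_1$ (winning determining) that decide the game, and then read off monotonicity, properness, strongness, and nonweakness from the combinatorics of the strings. To obtain properness and strongness simultaneously, I would impose \emph{self-duality} on the determining strings, requiring $\tau\in T_1$ exactly when the bitwise complement $\bar\tau\in T_0$; this makes exactly one of $S$, $S^c$ winning, which is precisely proper-and-strong. Monotonicity would be enforced by closing $T_1$ upward (and $T_0$ downward) under the coordinatewise order on strings, $\emptyset\notin\omega$ by placing some all-zero string $0^k$ in $T_0$, and nonweakness by ensuring that for each player $i$ some winning determining string omits $i$, so that no player is a veto player.

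The hard part will be reconciling computability with the \emph{absence} of a finite carrier. By compactness of Cantor space, if every coalition had a determining initial segment, the determining strings would have bounded length, and the game would depend on only finitely many players---a finite carrier. Hence an infinite game must leave some coalitions undecided by every finite prefix; by Proposition~\ref{cutprop} those coalitions must all be \emph{non}recursive, so that the $\delta$-indicator still extends to a partial recursive function. The crux is therefore to build an r.e. self-dual monotone ``bar'' $T_0\cup T_1$ that decides every \emph{recursive} coalition with a finite determining string yet leaves an infinite, complement-closed set of non-recursive coalitions undecided (keeping the carrier infinite). This is exactly the delicate construction carried out in \citet{kumabe-m07csg64}; I would adapt it, checking along the way that the modifications needed to force $\emptyset\notin\omega$ and to retain nonweakness disturb neither self-duality, nor monotonicity, nor the covering of all recursive coalitions.
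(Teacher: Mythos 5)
Your finite half is complete and coincides with the paper's own choice: the paper also takes the majority game (with an odd number of at least three players), and your cardinality checks of the four axioms are exactly right, as is the remark that finiteness gives computability for free. Your outline for the infinite half also tracks the strategy the paper actually follows in its Appendix~A: build r.e.\ sets $T_0$, $T_1$ of determining strings and invoke Proposition~\ref{delta0det2}; impose self-duality ($\tau\in T_1$ iff $\tau^c\in T_0$) so that exactly one of $S$, $S^c$ is winning, which is properness plus strongness; arrange a monotone-closure property for monotonicity; and---your sharpest observation---note that by compactness of Cantor space a computable game with no finite carrier must leave some point of Cantor space undecided by every finite prefix, and by Proposition~\ref{cutprop} any such point is necessarily \emph{non}recursive (the recursive coalitions are a countable dense, non-compact subset, so covering them does not force bounded lengths). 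That tension is indeed the crux, and the paper resolves it exactly where you say it must be resolved.

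The gap is that you never resolve it: the entire mathematical content of the infinite claim is the construction of an r.e.\ self-dual family of determining strings that catches \emph{every} recursive coalition while missing some non-recursive set, and your proposal delegates precisely this step to ``adapt the construction in \citet{kumabe-m07csg64}, checking along the way.'' A list of desiderata plus a promissory adaptation is not a proof; the verification you defer is the theorem. What the paper actually does (Appendix~A) is a diagonalization you would need to reproduce in some form: fix a recursive enumeration $\{k_s\}$ of $\{k:\varphi_k(k)\in\{0,1\}\}$, a set containing every characteristic index of a recursive coalition; at stage $s$ make determining the strings $\alpha$ of a suitable length $l_s>k_s$ with $\alpha(k_s)=\varphi_{k_s}(k_s)$ and $\alpha(k_{s'})=1-\varphi_{k_{s'}}(k_{s'})$ for all $s'<s$ (winning or losing according to the value $\varphi_{k_s}(k_s)$, with complemented strings assigned dually). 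Then every recursive coalition $S$ is decided by the stage of its own index, since $S(k_s)=\varphi_{k_s}(k_s)$ when $k_s$ is a characteristic index for $S$, whereas the diagonal set $A$ with $A(k_t)=1-\varphi_{k_t}(k_t)$ for all $t$ escapes every determining string; its initial segments have both winning and losing extensions arbitrarily far out, so no finite carrier exists. One further point: your worry that the cited example must be modified to make $\emptyset$ losing is moot for type~1, since monotonicity plus properness already force $\emptyset\notin\omega$; so a bare citation of \citet{kumabe-m07csg64} would have been cleaner than a promised-but-unexecuted adaptation, though the paper deliberately avoids both by giving the self-contained construction sketched above.
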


\begin{proof}
An example of a type~$1$ finite game is the majority game with an odd number of (at least three) players.
An example of a type~$1$ infinite game is given in Appendix~\ref{nice_game1}.\end{proof}

\begin{example} \label{type9ex}
We give a type~$9$ finite game.
Let $T=\{0, 1,2\}$ be a carrier and let $\omega |T:=\{S\cap T: S\in \omega\}$
consist of $\{0, 1, 2\}$, $\{0\}$, $\{1\}$, $\{2\}$.
The other four coalitions in~$T$ are losing.
Then, $\omega$ is nonmonotonic, proper, strong, and nonweak with $\nu(\omega)=2$.\end{example}

\subsection{Large Nakamura numbers}

Having considered all the other types of games,
we now turn to types $3$ and~$11$ (i.e., proper nonstrong nonweak games).
These are the only types that may have a Nakamura number greater than~3.

First, we consider games with finite carriers.  
An example of a game having Nakamura number equal to~$k\ge 2$ can be defined 
on the carrier $T=\{0,1, \ldots, k-1\}$;
the game~$\omega$ consists of the coalitions excluding at most one player in the carrier:
$S\in \omega$ if and only if $\#(T\cap S) \geq k-1$.
We extend this example slightly:

\begin{prop}\label{nakamura:type3}
For any $k\geq 3$,  there exists a finite, computable, type~$3$ 
(i.e., monotonic proper nonstrong nonweak) game~$\omega$
with Nakamura number $\nu(\omega)=k$.
\end{prop}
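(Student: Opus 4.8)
The plan is to exhibit a single explicit threshold game on a finite carrier and to read off all five properties from elementary counting. Fix $k\ge 3$ and take the carrier $T=\{0,1,\ldots,2k-1\}$, so $\#T=2k$. Define $\omega$ by $S\in\omega$ if and only if $\#(S\cap T)\ge 2k-2$; that is, a coalition wins exactly when it excludes at most two members of the carrier. This is the base example of the excerpt (exclude at most one of $k$ players) with the exclusion tolerance raised from one to two and the carrier doubled, which is the ``slight extension'' I use to force nonstrongness in the boundary case. Monotonicity is immediate, since enlarging $S$ can only enlarge $S\cap T$, and computability is immediate because the game is finite (Proposition~\ref{delta0det2}).

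First I would compute the Nakamura number. The minimal winning coalitions are exactly the subsets of $T$ of size $2k-2$, i.e.\ the complements in $T$ of the two-element subsets of $T$. For the upper bound, partition $T$ into $k$ pairs $B_1,\ldots,B_k$ and set $W_j=T\setminus B_j$; each $W_j$ is winning and $\bigcap_j W_j=\emptyset$, so $\nu(\omega)\le k$. For the lower bound, suppose $W_1,\ldots,W_m$ are winning with $\bigcap_j W_j=\emptyset$. Each excluded set $T\setminus(W_j\cap T)$ has size at most two, and these excluded sets must cover $T$; otherwise some carrier element $x$ lies in every $W_j\cap T$, hence in $\bigcap_j W_j$, a contradiction. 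Therefore $2m\ge \#T=2k$, giving $m\ge k$, and thus $\nu(\omega)=k$. In particular $\bigcap\omega=\emptyset$, so $\omega$ is nonweak (this is also guaranteed by Lemma~\ref{nakamura-finite}).

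Next I would verify properness and nonstrongness by the same counting, writing $a=\#(S\cap T)$, so that $\#(S^c\cap T)=2k-a$. Then $S$ is winning iff $a\ge 2k-2$ and $S^c$ is winning iff $a\le 2$. Since $k\ge 3$ we have $2k-2\ge 4>2$, so the two ranges are disjoint and no $S$ has both $S$ and $S^c$ winning: $\omega$ is proper (consistent with Lemma~\ref{mono-prop-nakamura}, as $\nu(\omega)=k\ge 3$). For nonstrongness, note that both $S$ and $S^c$ are losing exactly when $3\le a\le 2k-3$; choosing $S$ with $\#(S\cap T)=k$ is legitimate since $3\le k\le 2k-3$ for $k\ge 3$, and such an $S$ witnesses that $\omega$ is not strong. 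Collecting these facts, $\omega$ is a finite, computable, monotonic, proper, nonstrong, nonweak game with $\nu(\omega)=k$, i.e.\ a type~$3$ game.

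The only delicate point, and the reason for the extension, is the case $k=3$: the unmodified base example (exclude at most one of three players) is the three-player majority game, which is strong and hence lives in type~$1$ rather than type~$3$. Doubling the carrier and raising the tolerance to two creates genuine ties---coalitions $S$ with both $S$ and $S^c$ losing---while keeping the minimal number of two-element sets covering the carrier equal to $k$; this is precisely what simultaneously preserves $\nu(\omega)=k$ and breaks strongness. The remaining work is routine: I would just confirm that the integer inequalities $2k-2>2$ and $3\le k\le 2k-3$ each hold exactly when $k\ge 3$, which pins down both properness and the existence of a tie.
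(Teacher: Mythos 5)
Your proof is correct, but your construction is genuinely different from the paper's. The paper extends the ``exclude at most one player'' example by replacing players with blocks: it partitions a finite carrier into $k$ cells $T_0,\ldots,T_{k-1}$ (at least one cell containing two or more players) and declares $S$ winning iff $S$ contains at least $k-1$ cells; the lower bound on $\nu(\omega)$ is then immediate (each winning coalition fails to contain at most one cell, so fewer than $k$ of them leave some whole cell in the intersection), and nonstrongness comes from splitting a non-singleton cell between a coalition and its complement. You instead keep individual players, double the carrier to size $2k$, and raise the exclusion tolerance to two; your lower bound is a covering argument (excluded sets of size at most two must cover $2k$ points), and nonstrongness comes from the nonempty range $3\le a\le 2k-3$ between the quota and its mirror image. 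Both arguments are elementary and both handle the delicate $k=3$ boundary correctly. What the paper's template buys: replacing ``$\ge k-1$'' by ``$=k-1$'' in the same block construction immediately yields the type~11 analogue (Proposition~\ref{nakamura:type11}), and the block game is exactly the one given a strategic foundation in Remark~\ref{strategic-foundation}; what yours buys is symmetry (anonymity on the carrier) and a clean description of the minimal winning coalitions. One quibble: your parenthetical that nonweakness ``is also guaranteed by Lemma~\ref{nakamura-finite}'' is backwards---that lemma \emph{assumes} nonweakness to conclude that $\nu$ is finite; the correct observation is simply that weak games have $\nu=+\infty$ by definition, so $\nu(\omega)=k<\infty$ implies nonweakness (and in any case you exhibit $\bigcap\omega=\emptyset$ directly, so nothing is at stake).
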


\begin{proof}
Given $k\geq 2$, 
let $\{T_0,T_1, \ldots, T_{k-1}\}$ be a partition of a finite carrier $T=\bigcup_{l=0}^{k-1} T_l$.
Define $S\in \omega$ iff $\#\{T_l: T_l\subseteq S\} \geq k-1$.  
Then it is straightforward to show that $\omega$ is monotonic and nonweak with $\nu(\omega)=k$.
Now, suppose that $k\ge 3$. 
To show that $\omega$ is proper, suppose $S\in \omega$.
Then $S$ includes at least $k-1$ of the partition elements~$T_l$, implying that 
$S^c$ includes at most one of them.
To show that $\omega$ is nonstrong, suppose that a partition element, 
say $T_l$, contains at least two players, one of whom is denoted by~$t$.
We then have the following two losing coalitions complementing each other:
(i)~the union of $k-2$ partition elements~$T_{l'}$ and $\{t\}$
and (ii)~the union of the other partition element and $T_l\setminus\{t\}$.\end{proof}

\begin{remark}\label{k2}
Because of Lemma~\ref{mono-prop-nakamura}, Proposition~\ref{nakamura:type3}
precludes $k=2$.
Note that the game in the proof is nonproper if and only if $k=2$.
If $k \le 3$, then it generally fails to be strong, though it is indeed strong if
all the partition elements~$T_l$ consist of singletons.\end{remark}

\begin{prop} \label{nakamura:type11}
For any $k\geq 2$,  there exists a finite, computable, type~$11$
(i.e., nonmonotonic proper nonstrong nonweak) game~$\omega$
with Nakamura number $\nu(\omega)=k$.
\end{prop}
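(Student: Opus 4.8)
The plan is to give one explicit game that works uniformly for every $k\ge 2$, rather than patching the type~3 construction of Proposition~\ref{nakamura:type3} (which only delivers $k\ge 3$). I would take the finite carrier $T=\{0,1,\dots,k\}$, single out the ``main'' players $A=\{0,1,\dots,k-1\}$, reserve one \emph{auxiliary} player $k$, and define
\[
S\in\omega \iff \#(S\cap A)=k-1 \ \mbox{and}\ k\notin S .
\]
Since $T$ is finite, $\omega$ is computable (finite games are computable; see Proposition~\ref{delta0det2}), and one checks at once that $T$ is a carrier and that $\emptyset$ is losing, because $k-1\ge 1$ rules out $\#(\emptyset\cap A)=k-1$.

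Next I would identify the minimal winning coalitions and read off $\nu(\omega)$. A coalition is winning exactly when it omits the auxiliary player and omits precisely one main player, so the minimal winning coalitions are $W_i:=A\setminus\{i\}$ for $i\in A$, and every winning coalition restricts on $T$ to one of them. Because the game is finite, it suffices to consider these $W_i$ when computing $\nu(\omega)$. The key observation is that each $j\in A$ is excluded by \emph{exactly one} of them, namely $W_j$; hence a subfamily has empty intersection only if it excludes every $j\in A$, which forces it to contain all $k$ of the $W_i$. Concretely, $\bigcap_{i\in A}W_i=\emptyset$ while omitting any single $W_j$ leaves $\bigcap_{i\ne j}W_i=\{j\}\ne\emptyset$, so $\nu(\omega)=k$ and $\omega$ is nonweak.

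The remaining axioms follow from the role of the auxiliary player. For properness, any $S\in\omega$ excludes $k$, so $S^c$ contains $k$ and is losing; thus no complementary pair is both winning. For nonmonotonicity, $W_0\in\omega$ but its superset $A\supseteq W_0$ satisfies $\#(A\cap A)=k\ne k-1$ and is losing. For nonstrongness, both $\{k\}$ and its complement are losing (the former contains $k$, the latter meets $A$ in all $k$ players). Hence $\omega$ is nonmonotonic, proper, nonstrong, and nonweak, i.e.\ type~11, with $\nu(\omega)=k$, as required.

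I expect the only genuine obstacle to be uniformity in $k$, and in particular securing properness at the boundary case $k=2$. The plain nonmonotonic analogue of the type~3 game---``$S$ winning iff $S$ contains exactly $k-1$ of $k$ main players''---is already proper for $k\ge 3$, but it fails properness at $k=2$, since the complement of a one-player winning set is again a one-player winning set. Adjoining an auxiliary player who never belongs to any winning coalition removes this failure for all $k$ at once, while leaving the minimal winning coalitions $W_i$ (and hence the entire Nakamura-number computation) untouched. The one point I would therefore verify with care is precisely that this added player alters neither $\nu(\omega)$ nor nonweakness.
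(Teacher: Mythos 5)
Your construction is correct---I checked the carrier property, the identification of the minimal winning coalitions $W_i=A\setminus\{i\}$, the counting argument giving $\nu(\omega)=k$, and the verification of all four axioms, and each step goes through---but it differs from the paper's proof in a worthwhile way. The paper splits into two cases: for $k\ge 3$ it takes a partition $\{T_0,\dots,T_{k-1}\}$ of a finite carrier and declares $S$ winning iff $S$ includes \emph{exactly} $k-1$ of the parts (the nonmonotonic twin of the type~3 game), and for $k=2$ it gives a separate ad hoc example on the carrier $\{0,1,2\}$ with $\omega|T=\{\{0\},\{1\}\}$, precisely because the partition construction is nonproper when $k=2$. Your auxiliary-player device handles all $k\ge 2$ uniformly: excluding player~$k$ from every winning coalition makes properness a one-line observation ($k\in S^c$), and for $k=2$ your game actually coincides with the paper's ad hoc example, so your construction is the natural generalization of it that eliminates the case split. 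What the paper's choice buys in return is that its $k\ge 3$ game (coalitions consisting of exactly $k-1$ of the $k$ players) is exactly the game later derived in Remark~\ref{strategic-foundation} from a voting game form via exact effectivity, so the two-case proof feeds directly into that strategic foundation, whereas your game, having a player who can never belong to a winning coalition, does not arise from that game form. Both proofs rely on the same underlying counting argument (each player is excluded by exactly one minimal winning coalition), which you spell out more explicitly than the paper does, since the paper delegates it to ``similar to Proposition~\ref{nakamura:type3}.''
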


\begin{proof}
Given $k\geq 3$, 
let $\{T_0,T_1, \ldots, T_{k-1}\}$ be a partition of a finite carrier $T=\bigcup_{l=0}^{k-1} T_l$.
Define $S\in \omega$ iff $\#\{T_l: T_l\subseteq S\} = k-1$.  
Then $\omega$ is nonmonotonic; the rest of the proof is similar to that
of Proposition~\ref{nakamura:type3}.

For $k=2$, we give the following example:
Let $T=\{0, 1, 2\}$ be a carrier and define $\omega|T=\{S\cap T: S\in \omega\}=\{\{0\}, \{1\} \}$.
It is nonmonotonic since $\{0\}\in \omega$ but $\{0,1\}\notin \omega$.
It is proper: $S\in \omega$ implies $S\cap T=\{0\}$ or $\{1\}$,
which in turn implies $S^c\cap T=\{1,2\}$ or $\{0, 2\}$, neither of which
is in $\omega|T$; hence $S^c\notin\omega$.
It is nonstrong since $\{0, 1\}$ and $\{2\}$ are losing.
It is nonweak with $\nu(\omega)=2$ since the intersection of the winning coalitions
$\{0\}$ and $\{1\}$ is empty.\end{proof}

\begin{remark}[Strategic Foundations] 
 \label{strategic-foundation}
We justify (give a strategic foundation for) type~3 and type~11 simple games 
having Nakamura number equal to $k\geq 3$ by deriving them from certain game forms.
These types particularly deserve justification,
since they are the only types that contain (two games with different Nakamura numbers and)
games with an arbitrarily large Nakamura number.

Let $g\colon \prod \Sigma_i \to X$ be a game form on the set $\{0, 1, \ldots, k-1\}$ of players,
defined by $g(\sigma)=1$ if and only if $\#\{ i: \sigma_i=1\} \ge k-1$,
where $\Sigma_i=\{0, 1\}$ is the set of player $i$'s strategies and
$X=\{0, 1\}$ is the set of outcomes.  One can think of the game form as representing a
voting rule in which no individual has the veto power.
We claim that, depending on the notion of \emph{effectivity} employed,
the simple game derived from~$g$ is either 
(i)~the type 3 game consisting of the coalitions containing at least $k-1$ players 
(a game in the proof of Proposition~\ref{nakamura:type3}) or
(ii)~the type 11 game consisting of the coalitions made up of exactly $k-1$ players
(a game in the proof of Proposition~\ref{nakamura:type11}).

(i)~For each coalition $S\subseteq I$, let $\Sigma_S:= \prod_{i\in S} \Sigma_i$
and $\Sigma_{-S}:= \prod_{i\notin S} \Sigma_i$ be the collective strategy set of $S$
and that of the complement.
A coalition $S$ is \emph{$\alpha$-effective for} a subset $B \subseteq X$ 
if $S$ has a strategy $\sigma_S\in \Sigma_S$
such that for any strategy $\sigma_{-S}\in \Sigma_{-S}$ of the complement,
$g(\sigma_S, \sigma_{-S})\in B$.\footnote{The notion of $\alpha$-effectivity is standard \citep[e.g.,][]{peleg02hbscw}.}
Define a simple game as the set of winning coalitions, 
where a coalition is \emph{winning} if it is $\alpha$-effective for all subsets of~$X$.
One can easily check that the winning coalitions for our $g$ are the coalitions containing at least $k-1$ players.

(ii)~A coalition $S$ is \emph{exactly effective for} a subset $B \subseteq X$ 
if $B=\{ g(\sigma_S, \sigma_{-S}) : \sigma_{-S}\in \Sigma_{-S}\}$
for some $\sigma_S\in \Sigma_S$.\footnote{
This notion of effectivity is proposed by \citet{kolpin90}.  
It is more informative than $\alpha$-effectivity.
Indeed, $S$ is $\alpha$-effective for~$B$ if and only if
there exists some $B'\subseteq B$ such that $S$ is exactly effective for~$B'$.
If a coalition $S$ is exactly effective (not just $\alpha$-effective) for a set~$B$ of at least two elements,
then the complement $S^c$ can realize \emph{every} (not just some) element in~$B$ by a suitable 
choice of strategies.
Intuitively, then, $S$ has the power to leave the others to choose from~$B$.
This notion is potentially more suitable for studying certain aspects of the theory of rights 
than $\alpha$-effectivity is,
since it describes a coalition's \emph{right to stay passive} more finely.
(\citet[Definition 11]{deb04} %
is an example of an application to the theory of rights.)
To show that $\alpha$-effectivity is inadequate, take, for example, 
``maximal freedom'' and the ``right to be completely passive'' by \citet{hees99}.
Van Hees resolves the liberal paradox by adopting either of these notions.
A necessary condition for maximal freedom is monotonicity with respect to
alternatives: if a coalition is effective for a set, it should be effective for a larger set.
A coalition is said to have the right to be completely passive if it is effective for the set~$X$
of all alternatives.  
Since $\alpha$-effectivity is monotonic with respect to alternatives
and since every coalition is $\alpha$-effective for~$X$,
$\alpha$-effectivity fails to capture the subtle, but important differences that these
notions can discriminate.
} %
Define a simple game as the set of winning coalitions, 
where a coalition is \emph{winning} if it is exactly effective for all subsets of~$X$.
Then, the winning coalitions for our $g$ are the coalitions made up of exactly $k-1$ players,
which confirms our claim.
In particular, the grand coalition $\{0, 1, \ldots, k-1\}$---while
 it is exactly effective for $\{0\}$ and $\{1\}$---is not 
exactly effective for $\{0, 1\}$, but a coalition made up of exactly $k-1$ players is.\end{remark}

Next, we move on to games without finite carriers.
We construct them using the notion of the \emph{product} of games.
By a \textbf{recursive function $f$ on a recursive set $T\subseteq N$} we mean 
a recursive function restricted to~$T$.

Let $(f_1, f_2)$ be a pair consisting of 
a one-to-one recursive function $f_1$ on a (not necessarily finite) recursive set $T\subseteq N$ and 
a one-to-one recursive function~$f_2$,
whose images partition the set of players: $f_1(T)\cap f_2(N)=\emptyset$
and $f_1(T)\cup f_2(N)=N$.
Note that $f_1^{-1}$ and $f_2^{-1}$ are 
recursive functions on recursive sets $f_1(T)$ and $f_2(N)$, respectively.\footnote{\label{image-of-rec}%
In general, if $f$ is a recursive function and $S$ is a recursive set, 
then the image $f(S)$ is recursively enumerable.
So $f_1(T)$ and $f_2(N)$ are recursively enumerable.
Since they complement each other on the set~$N$, they are in fact both recursive.}

We define the \textbf{disjoint image of coalitions $S_1\subseteq T$ and $S_2\subseteq N$ 
with respect to $(f_1, f_2)$} as the set
\[
S_1*S_2=f_1(S_1)\cup f_2(S_2),
\]
where $f_1(S_1)=\{f_1(i): i\in S_1\}$ and  $f_2(S_2)=\{f_2(i): i\in S_2\}$.

\begin{example}
When $T=N$, an easy example is given by $f_1: i\mapsto 2i$ and $f_2: i\mapsto 2i+1$.
In this case, $f_1(T)=2N:=\{2i: i\in N\}$, $f_2(N)=2N+1:=\{2i+1: i\in N\}$, and
$\{0, 2, 3\}*\{1, 2, 4\}=\{0, 4, 6, 3, 5, 9\}$.
When $T=\{0,1, \ldots, k-1\}$ for some $k \ge 1$, an easy example is given by $f_1: i\mapsto i$
 and $f_2: i\mapsto i+k$.  In this case, if $k=4$, we have
 $f_1(T)=T$, $f_2(N)=N\setminus T=\{4, 5, 6, \ldots\}$, and
 $\{0, 2, 3\}*\{1, 2, 4\}=\{0, 2, 3, 5, 6, 8\}$.\end{example}

\begin{lemma}\label{isREC}
Let $\REC$ be the class of (recursive) coalitions.  Then, 
\[
 \{S_1*S_2: \textup{$S_1\subseteq T$ and $S_2$ are coalitions}\}=\REC.
 \]
 \end{lemma}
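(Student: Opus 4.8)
The plan is to prove the two set inclusions separately, relying throughout on the facts established in the footnote preceding the lemma: the images $f_1(T)$ and $f_2(N)$ are both recursive and partition $N$, and the inverse functions $f_1^{-1}$ and $f_2^{-1}$ are recursive on their respective (recursive) domains $f_1(T)$ and $f_2(N)$.

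For the inclusion $\subseteq$, I would fix recursive coalitions $S_1\subseteq T$ and $S_2\subseteq N$ and exhibit a decision procedure for membership in $S_1*S_2=f_1(S_1)\cup f_2(S_2)$. Given an arbitrary $n\in N$, first decide recursively whether $n\in f_1(T)$ or $n\in f_2(N)$; exactly one holds, since these sets partition~$N$. In the first case, compute $f_1^{-1}(n)\in T$ and answer ``yes'' precisely when $f_1^{-1}(n)\in S_1$ (decidable because $S_1$ is recursive); in the second case, answer ``yes'' precisely when $f_2^{-1}(n)\in S_2$. Because $f_1$ and $f_2$ are one-to-one and their images are disjoint, this correctly decides $n\in S_1*S_2$, so $S_1*S_2\in\REC$.

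For the inclusion $\supseteq$, I would start from an arbitrary recursive coalition $S\in\REC$ and recover the factors by pulling back along $f_1$ and $f_2$. Set
\[
S_1 = \{\, i\in T : f_1(i)\in S \,\} \quad\mbox{and}\quad S_2 = \{\, i\in N : f_2(i)\in S \,\}.
\]
Both are recursive: membership in $S_1$ amounts to deciding $i\in T$ together with $f_1(i)\in S$, and similarly for $S_2$; moreover $S_1\subseteq T$ by construction. It then remains to verify $S_1*S_2=S$. Using the injectivity of $f_1$ one obtains $f_1(S_1)=f_1(T)\cap S$, and likewise $f_2(S_2)=f_2(N)\cap S$; taking the union and invoking $f_1(T)\cup f_2(N)=N$ gives $S_1*S_2=(f_1(T)\cup f_2(N))\cap S=S$.

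The argument is essentially a verification, so there is no deep obstacle. The only point requiring care is that $f_1^{-1}$ and $f_2^{-1}$ are partial, being defined only on $f_1(T)$ and $f_2(N)$ respectively; this is exactly why the $\subseteq$ direction must branch on the partition-membership test \emph{before} applying an inverse. Once that is handled, both inclusions follow directly from the recursiveness of the images and their inverses.
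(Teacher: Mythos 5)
Your proof is correct, and its overall shape matches the paper's: both directions are handled by direct verification using exactly the two facts recorded before the lemma (recursiveness of the partition $f_1(T)$, $f_2(N)$ and of the inverses on those sets). The $\supseteq$ direction is essentially identical to the paper's: your $S_1=\{i\in T: f_1(i)\in S\}$ and $S_2=\{i\in N: f_2(i)\in S\}$ are the same sets the paper writes as $f_1^{-1}(S\cap f_1(T))$ and $f_2^{-1}(S\cap f_2(N))$; you are in fact slightly more complete, since the paper never explicitly checks that these pullbacks are recursive, whereas you do. (Minor remark: injectivity of $f_1$ is not actually needed for $f_1(S_1)=f_1(T)\cap S$; it is needed only where inverses are invoked.) The $\subseteq$ direction is where you genuinely diverge: the paper argues that $f_1(S_1)$ and $f_2(S_2)$ are each recursive by rerunning the footnote's argument --- each image is recursively enumerable and has a recursively enumerable complement (e.g.\ $f_1(S_1)^c=f_1(T\setminus S_1)\cup f_2(N)$), hence is recursive --- and then takes the union. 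You instead exhibit a single decision procedure for $S_1*S_2$ that first tests which cell of the partition $\{f_1(T),f_2(N)\}$ the input lies in and then applies the appropriate inverse and membership test. Your version is more concrete and sidesteps a second appeal to the ``r.e.\ set with r.e.\ complement is recursive'' principle, at the cost of leaning directly on the computability of $f_1^{-1}$ and $f_2^{-1}$; the paper's version is shorter on the page because it reuses the footnote's closure argument wholesale. Both are sound, and your care about the inverses being partial (defined only on $f_1(T)$ and $f_2(N)$) is exactly the right point to flag.
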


\begin{proof}
($\subseteq$).  By an argument similar to that in footnote~\ref{image-of-rec},
$f_1(S_1)$ and $f_2(S_2)$ are recursive.
It follows that $f_1(S_1)\cup f_2(S_2)$ is recursive.

($\supseteq$).  Let $S$ be recursive.  Then
\begin{eqnarray*}
S & = & [S\cap f_1(T)]\cup [S\cap f_2(N)] \\
	& = & [f_1(f_1^{-1}( S\cap f_1(T) ))]\cup [f_2(f_2^{-1}( S\cap f_2(N) ))]
\end{eqnarray*}
\end{proof}

Let $\omega_1$ be a game with a carrier included in a set~$T$.
(This is without loss of generality since the grand coalition~$N$ is a carrier for any game.)
Let $\omega_2$ be a game.
We define the \textbf{product $\omega_1\otimes \omega_2$ of $\omega_1$ and $\omega_2$
with respect to $(f_1, f_2)$}
by the set
\[
 \omega_1\otimes \omega_2=\{f_1(S_1)\cup f_2(S_2): \textrm{$S_1\in \omega_1$ and $S_2\in\omega_2$}\}
\]
of the disjoint images of winning coalitions.\footnote{The notion of the \emph{product} of games is not new.
For example, \citet{shapley62} defines it for two games on disjoint subsets of players.}
By Lemma~\ref{isREC}, $\omega_1\otimes \omega_2$ is a simple game.
We have
$S_1*S_2\in \omega_1\otimes \omega_2$ if and only if $S_1\in \omega_1$ and  $S_2\in \omega_2$.

\begin{lemma} \label{product-comp}
If $\omega_1$ and $\omega_2$ are computable, then the product $\omega_1\otimes\omega_2$ is computable.
\end{lemma}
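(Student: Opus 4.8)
The plan is to reduce the computability of $\omega_1\otimes\omega_2$ to that of $\omega_1$ and $\omega_2$ by showing that, from a characteristic index for an arbitrary recursive coalition $S$, one can \emph{effectively} produce characteristic indices for its two components. Recall from the discussion preceding Lemma~\ref{isREC} that every recursive coalition $S$ decomposes uniquely as $S=S_1*S_2$ with $S_1=f_1^{-1}(S\cap f_1(T))\subseteq T$ and $S_2=f_2^{-1}(S\cap f_2(N))$, and that $S\in\omega_1\otimes\omega_2$ if and only if $S_1\in\omega_1$ and $S_2\in\omega_2$. So it suffices, given a characteristic index for $S$, to compute the value of $\delta_{\omega_1}$ at $S_1$ and of $\delta_{\omega_2}$ at $S_2$ and to combine them.

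First I would write down the characteristic functions of the components explicitly. Using the injectivity of $f_1$ and the fact that $f_1(T)$ and $f_2(N)$ partition $N$, one checks that for $i\in T$ we have $i\in S_1$ iff $f_1(i)\in S$, while $i\notin S_1$ for $i\notin T$, and that $i\in S_2$ iff $f_2(i)\in S$ for every $i$. Hence, if $e$ is a characteristic index for $S$ (so $\varphi_e=\chi_S$), the value $\chi_{S_1}(i)$ is computed by deciding the recursive predicate $i\in T$, returning $0$ if it fails and otherwise returning $\varphi_e(f_1(i))$; and $\chi_{S_2}(i)$ is computed by returning $\varphi_e(f_2(i))$. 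Both procedures are partial recursive uniformly in $(e,i)$, since $T$, $f_1$, and $f_2$ are recursive.

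Next I would apply the s-m-n theorem to obtain total recursive functions $g_1$ and $g_2$ such that $\varphi_{g_1(e)}$ and $\varphi_{g_2(e)}$ are the two procedures above. The key point is that when $e$ is a genuine characteristic index for a recursive coalition $S$, the functions $\varphi_{g_1(e)}$ and $\varphi_{g_2(e)}$ are total and $\{0,1\}$-valued, so $g_1(e)$ and $g_2(e)$ are characteristic indices for $S_1$ and $S_2$, respectively. Finally, letting $\psi_1$ and $\psi_2$ be the partial recursive extensions of $\delta_{\omega_1}$ and $\delta_{\omega_2}$ guaranteed by the computability hypothesis, I would define $\psi(e)=\psi_1(g_1(e))\cdot\psi_2(g_2(e))$, computed by running both subcomputations to completion. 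This $\psi$ is partial recursive, and on any characteristic index $e$ for $S$ it returns $\delta_{\omega_1}(g_1(e))\cdot\delta_{\omega_2}(g_2(e))$, which equals $1$ exactly when $S_1\in\omega_1$ and $S_2\in\omega_2$, i.e.\ when $S\in\omega_1\otimes\omega_2$; thus $\psi$ agrees with $\delta_{\omega_1\otimes\omega_2}$ wherever the latter is defined, establishing $\delta$-computability.

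I expect the main obstacle to be purely bookkeeping: verifying that $g_1(e)$ and $g_2(e)$ are \emph{characteristic} indices (that is, that the component procedures compute total $\{0,1\}$-valued functions) whenever $e$ is one, and confirming that $\psi$ need only agree with $\delta_{\omega_1\otimes\omega_2}$ on its domain of definition. On indices $e$ that fail to code a recursive set, $\psi$ may diverge or return a spurious value, which is harmless for the notion of $\delta$-computability.
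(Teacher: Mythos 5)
Your proof is correct and takes essentially the same route as the paper's: both arguments reduce computability of the product to showing that, from a characteristic index $e$ for $S=S_1*S_2$, one can effectively (via the s-m-n/Parameter theorem) obtain characteristic indices for the components $S_1$ and $S_2$, which are then fed into the partial recursive extensions of $\delta_{\omega_1}$ and $\delta_{\omega_2}$. The only cosmetic differences are that the paper computes $\chi_{S_1}$ by first forming an index for $S\cap f_1(T)$ and composing with a recursive map that sends $i\notin T$ to the dummy point $f_2(0)$, whereas you use a direct case split on $i\in T$, and that you spell out the final combination $\psi(e)=\psi_1(g_1(e))\cdot\psi_2(g_2(e))$, which the paper leaves implicit behind ``it suffices to show.''
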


\begin{proof}
Let $e$ be a characteristic index for a coalition $S:=S_1*S_2=f_1(S_1)\cup f_2(S_2)$.
It suffices to show that given $e$, we can effectively obtain a characteristic index for $S_1$
 (and similarly for $S_2$).

Let $t$ be a characteristic index for~$f_1(T)$, a fixed recursive set.
Effectively obtain \citep[Corollary II.2.3]{soare87} from $e$ and $t$ a characteristic index $e'$ for
$f_1(S_1)=[f_1(S_1)\cup f_2(S_2)]\cap f_1(T)$.
Let $t'$ be an index for the recursive function
\[
\varphi_{t'}(i)= \left\{ \begin{array}{ll}
		f_1(i) & \mbox{if $i\in T$}\\
		f_2(0) & \mbox{otherwise.}
		\end{array}
	\right.
\]
We claim that $\varphi_{e'}\circ \varphi_{t'}$ is the characteristic function 
for the recursive set~$S_1$.
(\emph{Details}.  Suppose $i\in S_1$ first.  Then $i\in T$ and $f_1(i)\in f_1(S_1)$.
Hence  $\varphi_{e'}\circ \varphi_{t'}(i)= \varphi_{e'}(f_1(i))=1$.
Suppose $i\notin S_1$ next.  If $i\in T$, then $f_1(i)\in f_1(T)\setminus f_1(S_1)$.
Hence $\varphi_{e'}\circ \varphi_{t'}(i)= \varphi_{e'}(f_1(i))=0$.
If $i\notin T$, then $\varphi_{e'}\circ \varphi_{t'}(i)= \varphi_{e'}(f_2(0))=0$,
since $f_2(0)\notin f_1(S_1)$.)

By the Parameter Theorem \citep[I.3.5]{soare87},
there is a recursive function $g$ such that $\varphi_{g(e')}(i)=\varphi_{e'}\circ \varphi_{t'}(i)$,
implying that $g(e')$ is characteristic index for $S_1$ that can be obtained effectively.\end{proof}

It turns out that the construction based on the product is very useful for our purpose.

\begin{lemma} \label{product-monotonic}
$\omega_1$ and $\omega_2$ are monotonic if and only if the product $\omega_1\otimes\omega_2$ is monotonic.
\end{lemma}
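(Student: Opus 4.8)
The plan is to reduce everything to the observation that the disjoint-image map $(S_1,S_2)\mapsto S_1*S_2$ identifies $\REC$, ordered by inclusion, with the product of the inclusion orders on the subsets of $T$ and the subsets of $N$. Monotonicity of $\omega_1\otimes\omega_2$ then says precisely that its defining set of pairs is upward closed in the product order, which happens exactly when $\omega_1$ and $\omega_2$ are each upward closed, i.e.\ monotonic. So once this order isomorphism is in hand, both directions are essentially bookkeeping.

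The technical heart is the order lemma: for $S_1,S_1'\subseteq T$ and $S_2,S_2'\subseteq N$,
\[
S_1*S_2\subseteq S_1'*S_2' \iff S_1\subseteq S_1' \ \textrm{and}\ S_2\subseteq S_2'.
\]
First I would note the easy direction: $f_1,f_2$ preserve inclusions, so the right-hand side forces the left. For the reverse, I would intersect the inclusion $f_1(S_1)\cup f_2(S_2)\subseteq f_1(S_1')\cup f_2(S_2')$ with $f_1(T)$; because $f_1(T)\cap f_2(N)=\emptyset$ and all the $f_1$-images lie in $f_1(T)$ while all the $f_2$-images lie in $f_2(N)$, each side collapses to $f_1(S_1)\subseteq f_1(S_1')$, and injectivity of $f_1$ gives $S_1\subseteq S_1'$. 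Intersecting instead with $f_2(N)$ gives $S_2\subseteq S_2'$ symmetrically. Combined with Lemma~\ref{isREC} and the uniqueness of the decomposition $G=(f_1^{-1}(G\cap f_1(T)))*(f_2^{-1}(G\cap f_2(N)))$, this yields the claimed order isomorphism.

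For the ``only if'' direction I would take $S_1*S_2\in\omega_1\otimes\omega_2$ (so $S_1\in\omega_1$, $S_2\in\omega_2$) and a coalition $G\supseteq S_1*S_2$, write $G=S_1'*S_2'$, read off $S_1\subseteq S_1'$ and $S_2\subseteq S_2'$ from the order lemma, apply monotonicity of $\omega_1$ and $\omega_2$ to get $S_1'\in\omega_1$ and $S_2'\in\omega_2$, and conclude $G\in\omega_1\otimes\omega_2$. For the ``if'' direction, assuming the product monotonic and wishing to show $\omega_1$ monotonic, I would take $S_1\in\omega_1$ with $S_1\subseteq S_1'\subseteq T$, fix some $S_2\in\omega_2$, observe $S_1*S_2\subseteq S_1'*S_2$ with $S_1*S_2\in\omega_1\otimes\omega_2$, and use monotonicity of the product to obtain $S_1'*S_2\in\omega_1\otimes\omega_2$, hence $S_1'\in\omega_1$. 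The argument for $\omega_2$ is symmetric.

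The one delicate point, and the step I expect to require care, is selecting the witness $S_2\in\omega_2$ (resp.\ $S_1\in\omega_1$) in the ``if'' direction: it needs the opposite factor to be nonempty. Indeed, if $\omega_2=\emptyset$ then $\omega_1\otimes\omega_2=\emptyset$ is vacuously monotonic while $\omega_1$ need not be. I would therefore read the equivalence under the (harmless for our purposes) standing assumption that $\omega_1$ and $\omega_2$ are \emph{nonempty}, which holds throughout the applications, where both factors are nonweak.
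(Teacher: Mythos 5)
Your proof is correct and follows essentially the same route as the paper's: the paper likewise splits the inclusion $f_1(S_1)\cup f_2(S_2)\subseteq f_1(S'_1)\cup f_2(S'_2)$ using $f_1(S_1), f_1(S'_1)\subseteq f_1(T)$, $f_2(S_2), f_2(S'_2)\subseteq f_2(N)$, and $f_1(T)\cap f_2(N)=\emptyset$ to obtain $S_1\subseteq S'_1$ and $S_2\subseteq S'_2$, and in the converse direction it too fixes a witness $S_2\in\omega_2$ and deduces $S'_1*S_2\in\omega_1\otimes\omega_2$. The nonemptiness caveat you raise is a genuine (if minor) point that the paper's proof passes over silently---it also says ``choose any $S_2\in\omega_2$'' without comment---and your resolution (reading the lemma for nonempty factors) is harmless, since in every application both factors are nonweak and hence nonempty.
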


\begin{proof}
By Lemma~\ref{isREC}, any coalition $\hat{S}$ 
can be written as $\hat{S}=\hat{S}_1*\hat{S}_2$ for some $\hat{S}_1\subseteq T$ and $\hat{S}_2$.  

($\Longrightarrow$).
Suppose $S_1*S_2\in \omega_1\otimes\omega_2$ and $S_1*S_2\subseteq S'_1*S'_2$.
Then, we have $S_1\in\omega_1$, $S_2\in \omega_2$, 
and $f_1(S_1)\cup f_2(S_2)\subseteq f_1(S'_1)\cup f_2(S'_2)$.
Noting that $f_1(S_1)\subseteq f_1(T)$, $f_1(S'_1)\subseteq f_1(T)$, 
$f_2(S_2)\subseteq f_2(N)$, $f_2(S'_2)\subseteq f_2(N)$, and $f_1(T)\cap f_2(N)=\emptyset$,
we have $f_1(S_1)\subseteq f_1(S'_1)$ and $f_2(S_2)\subseteq f_2(S'_2)$.
Hence $S_1\subseteq S'_1$ and $S_2\subseteq S'_2$.
Since $S_1\in\omega_1$ and $S_2\in \omega_2$, monotonicity implies that
$S'_1\in\omega_1$ and $S'_2\in \omega_2$.
That is, $S'_1*S'_2\in \omega_1\otimes \omega_2$.

($\Longleftarrow$).  We suppose that $\omega_1\otimes\omega_2$ is monotonic 
and show that $\omega_1$ is monotonic.
Suppose $S_1\in\omega_1$ and $S_1\subset S'_1$.
Choose any $S_2\in \omega_2$.
Then $S_1*S_2\in \omega_1\otimes\omega_2$.
By monotonicity, $S'_1*S_2\in \omega_1\otimes\omega_2$.
Hence $S'_1\in \omega_1$.\end{proof}

\begin{lemma} \label{product-proper}
If $\omega_1$ or $\omega_2$ is proper,  then the product $\omega_1\otimes\omega_2$ is proper.
\end{lemma}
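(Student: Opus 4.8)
The plan is to reduce properness of the product to properness of whichever factor is assumed proper, by pushing everything through the representation of coalitions as disjoint images. First I would invoke Lemma~\ref{isREC} to write an arbitrary coalition uniquely as $S=S_1*S_2=f_1(S_1)\cup f_2(S_2)$ with $S_1\subseteq T$ and $S_2\subseteq N$; uniqueness holds because $f_1$ and $f_2$ are one-to-one with images partitioning $N$, so $S_1=f_1^{-1}(S\cap f_1(T))$ and $S_2=f_2^{-1}(S\cap f_2(N))$ are forced. Combined with the equivalence $S_1*S_2\in\omega_1\otimes\omega_2 \iff S_1\in\omega_1$ and $S_2\in\omega_2$ recorded just before Lemma~\ref{product-comp}, membership in the product can then be read off the two factors.

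The crucial step is the behaviour of complementation under this representation. Since $f_1(T)$ and $f_2(N)$ partition $N$ and each $f_j$ is injective, I would establish
\[
(S_1*S_2)^c=f_1(T\setminus S_1)\cup f_2(N\setminus S_2)=(T\setminus S_1)*(S_2^c),
\]
i.e. complementing a disjoint image amounts to complementing $S_1$ \emph{within $T$} and complementing $S_2$ within $N$. Applying the membership criterion again yields $S^c\in\omega_1\otimes\omega_2 \iff (T\setminus S_1)\in\omega_1$ and $S_2^c\in\omega_2$.

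With these two facts in hand the lemma falls out by cases. Take $S=S_1*S_2\in\omega_1\otimes\omega_2$, so that $S_1\in\omega_1$ and $S_2\in\omega_2$. If $\omega_2$ is proper, then $S_2\in\omega_2$ forces $S_2^c\notin\omega_2$, so the second conjunct of the criterion for $S^c$ fails and $S^c\notin\omega_1\otimes\omega_2$. If instead $\omega_1$ is proper, I need $(T\setminus S_1)\notin\omega_1$; here lies the one subtlety, since properness of $\omega_1$ as defined concerns the complement $N\setminus S_1$ rather than $T\setminus S_1$. I would bridge this with the carrier assumption: because $T$ is a carrier of $\omega_1$ and $(N\setminus S_1)\cap T=T\setminus S_1$, we have $N\setminus S_1\in\omega_1 \iff T\setminus S_1\in\omega_1$; properness gives $N\setminus S_1\notin\omega_1$ from $S_1\in\omega_1$, hence $T\setminus S_1\notin\omega_1$. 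In either case the relevant conjunct fails, so $S^c\notin\omega_1\otimes\omega_2$, which is exactly properness of the product.

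The main obstacle I anticipate is not any deep argument but precisely this carrier bookkeeping: ensuring that the ``complement within $T$'' produced by the product structure is matched against the correct notion of properness for $\omega_1$. Everything else is a routine consequence once the complement formula above is secured.
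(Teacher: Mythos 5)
Your proof is correct and follows essentially the same route as the paper: establish the complement formula $(S_1*S_2)^c=(T\setminus S_1)*(N\setminus S_2)$ and then let properness of whichever factor is proper kill the corresponding conjunct of the membership criterion. Your explicit carrier bookkeeping in the case where $\omega_1$ is proper makes precise a step the paper leaves implicit (it simply writes $S_1^c=T\setminus S_1$ and asserts that properness of $\omega_1$ gives $S_1^c\notin\omega_1$), so if anything your write-up is slightly more complete.
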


\begin{proof}
First, we can show that $(S_1*S_2)^c=S_1^c*S_2^c$, where $S_1^c=T\setminus S_1$
and $S_2^c=N\setminus S_2$.
Indeed, $(S_1*S_2)^c=(f_1(S_1)\cup f_2(S_2))^c =
(f_1(S_1))^c\cap (f_2(S_2))^c = 
[f_1(T)\setminus f_1(S_1) \cup f_2(N)] \cap [f_1(T)\cup f_2(N)\setminus f_2(S_2)]
=f_1(T\setminus S_1) \cup f_2(N\setminus S_2)
=S_1^c*S_2^c$. 

Now suppose $S_1*S_2\in \omega_1 \otimes \omega_2$.
Then, $S_1\in \omega_1$ and $S_2\in \omega_2$.
Since $\omega_1$ or $\omega_2$ is proper, we have 
either $S_1^c\notin\omega_1$ or $S_2^c\notin \omega_2$.
It follows that $(S_1*S_2)^c=S_1^c*S_2^c\notin \omega_1 \otimes \omega_2$.\end{proof}

\begin{lemma} \label{product-nonstrong}
Suppose $\omega_1$ is nonstrong or $\omega_2$ is nonstrong
or both $\omega_1$ and $\omega_2$ have losing coalitions.
Then the product $\omega_1\otimes\omega_2$ is nonstrong.
\end{lemma}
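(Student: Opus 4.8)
The plan is to reduce nonstrongness of $\omega_1\otimes\omega_2$ to a membership condition on the two factors, using two facts already available: the characterization $S_1*S_2\in\omega_1\otimes\omega_2$ if and only if $S_1\in\omega_1$ and $S_2\in\omega_2$, and the complement identity $(S_1*S_2)^c=S_1^c*S_2^c$ (with $S_1^c=T\setminus S_1$ and $S_2^c=N\setminus S_2$) established in the proof of Lemma~\ref{product-proper}. Since by Lemma~\ref{isREC} every coalition has the form $S_1*S_2$ with $S_1\subseteq T$, showing that the product is nonstrong amounts to exhibiting $S_1\subseteq T$ and $S_2\subseteq N$ for which both $S_1*S_2$ and its complement $S_1^c*S_2^c$ are losing. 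By the characterization this reduces to finding $S_1,S_2$ satisfying ($S_1\notin\omega_1$ or $S_2\notin\omega_2$) and ($S_1^c\notin\omega_1$ or $S_2^c\notin\omega_2$). First I would record the carrier fact that a coalition $U$ is losing in $\omega_1$ if and only if $U\cap T$ is losing (because $T$ is a carrier of $\omega_1$); in particular a losing coalition of $\omega_1$ may always be taken to be a subset of $T$.

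Then I would split into the three hypothesized cases. If $\omega_1$ is nonstrong, fix a witness $S$ with $S\notin\omega_1$ and $S^c\notin\omega_1$, and set $S_1=S\cap T$; the carrier fact applied to $S$ and to $S^c$ (noting $S^c\cap T=T\setminus S_1=S_1^c$) gives $S_1\notin\omega_1$ and $S_1^c\notin\omega_1$. Taking $S_2$ arbitrary, the condition $S_1\notin\omega_1$ makes $S_1*S_2$ losing and $S_1^c\notin\omega_1$ makes $S_1^c*S_2^c$ losing, so the product is nonstrong. The case in which $\omega_2$ is nonstrong is symmetric, with the coordinates exchanged and $S_1$ arbitrary. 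The only genuinely combinatorial case is the third, in which each factor merely has a losing coalition: here I would take $S_1\subseteq T$ losing in $\omega_1$ and $S_2$ losing in $\omega_2$, and form the single coalition $S_1*S_2^c$. Its membership fails because $S_1\notin\omega_1$, while its complement $S_1^c*(S_2^c)^c=S_1^c*S_2$ is losing because $S_2\notin\omega_2$. Thus a coalition and its complement are both losing, and nonstrongness follows.

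The main obstacle I anticipate is precisely this third case. Unlike the first two, neither factor by itself supplies a pair of \emph{complementary} losing coalitions, so one must pair a losing coalition of $\omega_1$ with the \emph{complement} of a losing coalition of $\omega_2$, arranging that the two independent ``losing'' facts are routed into the two halves of one complementary pair. The remaining points are routine: reducing losing coalitions to subsets of the carrier $T$ via the carrier property, and verifying that the chosen $S_1*S_2$ and $S_1^c*S_2^c$ really are coalitions (guaranteed by Lemma~\ref{isREC}), so that the membership characterization and the complement identity from Lemma~\ref{product-proper} apply verbatim.
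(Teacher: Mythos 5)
Your proof is correct and follows essentially the same route as the paper: the key step---pairing a losing coalition of $\omega_1$ with the \emph{complement} of a losing coalition of $\omega_2$, so that $S_1*S_2^c$ and its complement $S_1^c*S_2$ are both losing---is exactly the paper's argument (there the losing coalition of $\omega_2$ is simply named $S_2^c$, so the witness pair is written $S_1*S_2$ and $S_1^c*S_2^c$). The only difference is one of completeness: the paper writes out just this third case and leaves the two cases where one factor is itself nonstrong to the reader, whereas you spell them out explicitly via the carrier property.
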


\begin{proof}
We give a proof for the case where each game has a losing coalition: 
$S_1\notin\omega_1$ and $S_2^c\notin\omega_2$.
Then, $S_1*S_2\notin \omega_1 \otimes \omega_2$ and 
 $(S_1*S_2)^c=S_1^c*S_2^c\notin \omega_1 \otimes \omega_2$.\end{proof}

 \begin{lemma} \label{product-nonweak}
If $\omega_1$ and $\omega_2$ are nonweak, then the product $\omega_1\otimes\omega_2$ is nonweak.
Its Nakamura number is $\nu(\omega_1\otimes\omega_2)=\max\{ \nu(\omega_1), \nu(\omega_2)\}$.
\end{lemma}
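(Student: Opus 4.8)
The plan is to reduce both assertions to a single distributivity identity for the disjoint-image operation, and then read off nonweakness and the Nakamura formula from it. First I would establish that $*$ commutes with arbitrary intersections: for any families $(S_1^{(a)})$ of subsets of $T$ and $(S_2^{(a)})$ of subsets of $N$,
\[
\bigcap_a \bigl(S_1^{(a)}*S_2^{(a)}\bigr) = \Bigl(\bigcap_a S_1^{(a)}\Bigr)*\Bigl(\bigcap_a S_2^{(a)}\Bigr).
\]
This follows directly from the facts that $f_1(T)$ and $f_2(N)$ partition $N$ and that $f_1,f_2$ are one-to-one: an element $x=f_1(i)$ lies in $f_1(S_1^{(a)})\cup f_2(S_2^{(a)})$ exactly when $i\in S_1^{(a)}$, and an element $x=f_2(j)$ lies in it exactly when $j\in S_2^{(a)}$, so membership in the left-hand intersection is decided coordinate by coordinate.

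Granting this identity, nonweakness is immediate. Since $\omega_1$ and $\omega_2$ are nonweak, they are nonempty with $\bigcap\omega_1=\emptyset$ and $\bigcap\omega_2=\emptyset$; hence $\omega_1\otimes\omega_2$ is nonempty, and applying the identity to the full family of winning coalitions gives $\bigcap(\omega_1\otimes\omega_2)=(\bigcap\omega_1)*(\bigcap\omega_2)=\emptyset*\emptyset=\emptyset$, so the product is nonweak.

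For the Nakamura number I would prove the two inequalities separately, writing $\nu_1=\nu(\omega_1)$ and $\nu_2=\nu(\omega_2)$. For $\nu(\omega_1\otimes\omega_2)\ge\max\{\nu_1,\nu_2\}$: take any subfamily $\omega'\subseteq\omega_1\otimes\omega_2$ with $\bigcap\omega'=\emptyset$, and write each member uniquely as $S_1*S_2$ (the components are recovered by $S_1=f_1^{-1}(W\cap f_1(T))$ and $S_2=f_2^{-1}(W\cap f_2(N))$). By the identity, $\bigcap\omega'=\emptyset$ forces both the set of first components and the set of second components to have empty intersection; these projected families lie in $\omega_1$ and $\omega_2$ and have cardinality at most $\#\omega'$, so $\#\omega'\ge\nu_1$ and $\#\omega'\ge\nu_2$. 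For $\nu(\omega_1\otimes\omega_2)\le\max\{\nu_1,\nu_2\}$: assuming without loss of generality $\nu_1\ge\nu_2$, I would fix a witnessing family of size $\nu_1$ in $\omega_1$ with empty intersection and one of size $\nu_2$ in $\omega_2$ with empty intersection, pair them into $\nu_1$ disjoint images, and reuse a fixed winning coalition of $\omega_2$ to fill the extra $\nu_1-\nu_2$ slots; by the identity the resulting family of at most $\nu_1$ winning coalitions of the product has empty intersection, so $\nu(\omega_1\otimes\omega_2)\le\nu_1$.

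The only delicate point is the cardinality bookkeeping, since nonweakness alone does not force $\nu_1,\nu_2$ to be finite (computability is not assumed here): both the projection argument and the pairing argument must be phrased so that they remain valid for infinite cardinals. They do go through verbatim, so the genuine content of the proof is just the coordinatewise membership computation underlying the distributivity identity, which is routine.
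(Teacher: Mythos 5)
Your proposal is correct and takes essentially the same route as the paper: the paper's entire displayed argument is the distributivity computation $\bigcap(\omega_1\otimes\omega_2)=(\bigcap\omega_1)*(\bigcap\omega_2)=\emptyset$, after which it simply states that ``the proof for the Nakamura number is similar.'' Your two-inequality argument (projection of an arbitrary subfamily for $\nu(\omega_1\otimes\omega_2)\ge\max\{\nu_1,\nu_2\}$, and pairing-with-padding for the reverse inequality) is a correct filling-in of that omitted part, resting on the same key identity---stated in the form needed for arbitrary paired subfamilies rather than only the full product family, which is exactly what the omitted step requires.
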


\begin{proof}
If $\bigcap\omega_1=\bigcap\omega_2=\emptyset$, then 
$\bigcap (\omega_1\otimes\omega_2)
=\bigcap_{S_1*S_2\in \omega_1\otimes\omega_2} (S_1*S_2)
=\bigcap_{S_1\in \omega_1, S_2\in \omega_2} (f_1(S_1)\cup f_2(S_2))
=(\bigcap_{S_1\in \omega_1} f_1(S_1))\cup (\bigcap_{S_2\in \omega_2} f_2(S_2))$
[because $f_1(S_1)\cap f_2(S_2)=\emptyset$ for all $S_1$ and $S_2$]
$=f_1(\bigcap_{S_1\in \omega_1} S_1)\cup f_2(\bigcap_{S_2\in \omega_2} S_2)
=(\bigcap\omega_1) * (\bigcap\omega_2)=\emptyset$.
The proof for the Nakamura number is similar.\end{proof}

Propositions \ref{nakamura:type3} and \ref{nakamura:type11} have analogues for infinite games
(because of Lemma~\ref{mono-prop-nakamura} again, Proposition~\ref{nakamura:type3inf}
precludes $k=2$):

\begin{prop} \label{nakamura:type3inf}
For any $k\geq 3$,  there exists an infinite, computable, type~$3$ 
(i.e., monotonic proper nonstrong nonweak) game~$\omega$
with Nakamura number $\nu(\omega)=k$.\end{prop}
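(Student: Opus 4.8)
The plan is to realize $\omega$ as a product $\omega_1\otimes\omega_2$, chosen so that the product lemmas deliver precisely type~3 with the prescribed Nakamura number. For $\omega_1$ I would take the finite, computable, type~3 game with $\nu(\omega_1)=k$ furnished by Proposition~\ref{nakamura:type3} (this is where the hypothesis $k\ge 3$ enters), whose carrier lies in a finite set~$T$. For $\omega_2$ I would take an infinite, computable, type~1 game (monotonic, proper, strong, nonweak) supplied by Proposition~\ref{type1ex}; by Lemmas~\ref{strong-nakamura3} and~\ref{mono-prop-nakamura} any such game has $\nu(\omega_2)=3$. I would then form the product with respect to any admissible pair $(f_1,f_2)$ whose first component $f_1$ is defined on a finite set $T$ containing a carrier of~$\omega_1$ (for instance $f_1\colon i\mapsto i$ on $T=\{0,\ldots,m-1\}$ and $f_2\colon i\mapsto i+m$).

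The type of $\omega_1\otimes\omega_2$ then follows directly from the product lemmas. Monotonicity holds because both factors are monotonic (Lemma~\ref{product-monotonic}); properness holds because both factors are proper (Lemma~\ref{product-proper}); nonstrongness holds because $\omega_1$ is nonstrong (Lemma~\ref{product-nonstrong}); and since both factors are nonweak, Lemma~\ref{product-nonweak} gives nonweakness together with $\nu(\omega_1\otimes\omega_2)=\max\{\nu(\omega_1),\nu(\omega_2)\}=\max\{k,3\}=k$. Computability of the product is immediate from Lemma~\ref{product-comp}, since a finite game is computable and $\omega_2$ is computable by construction. Hence $\omega_1\otimes\omega_2$ is a computable, monotonic, proper, nonstrong, nonweak game with Nakamura number~$k$.

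The one point not handed to us by the lemmas, and the main obstacle, is that the product is genuinely \emph{infinite}, i.e.\ has no finite carrier. I would establish the contrapositive: a finite carrier of $\omega_1\otimes\omega_2$ would force a finite carrier on $\omega_2$, contradicting the choice of an infinite~$\omega_2$. Suppose $C$ were a finite carrier; since any superset of a carrier is a carrier, I may enlarge $C$ to assume $C\supseteq f_1(T)$, so that $f_1^{-1}(C\cap f_1(T))=T$. Writing an arbitrary coalition as $S_1*S_2$ via Lemma~\ref{isREC} and setting $C_2:=f_2^{-1}(C\cap f_2(N))$ (a finite set), the identity $(S_1*S_2)\cap C=S_1*(S_2\cap C_2)$ together with the defining biconditional $S_1*S_2\in\omega\iff S_1\in\omega_1$ and $S_2\in\omega_2$ and the carrier property of~$C$ yields, after fixing any winning $S_1\in\omega_1$, that $S_2\in\omega_2\iff S_2\cap C_2\in\omega_2$ for every~$S_2$. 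Thus $C_2$ is a finite carrier of $\omega_2$, the desired contradiction. The only remaining verifications are the routine disjoint-image identities for $f_1,f_2$ already exploited in the proofs of Lemmas~\ref{product-proper} and~\ref{product-nonweak}.
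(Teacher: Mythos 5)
Your proposal is correct and follows essentially the same route as the paper's own proof: take the product $\omega_1\otimes\omega_2$ of the finite, computable, type~3 game with $\nu(\omega_1)=k$ from Proposition~\ref{nakamura:type3} and the infinite, computable, type~1 game from Proposition~\ref{type1ex}, then invoke Lemmas~\ref{product-comp}, \ref{product-monotonic}, \ref{product-proper}, \ref{product-nonstrong}, and~\ref{product-nonweak} to get computability, the type, and $\nu=\max\{k,3\}=k$. In fact your write-up is slightly more complete than the paper's, which simply asserts the product ``satisfies the conditions'': your explicit argument that a finite carrier $C\supseteq f_1(T)$ of the product would induce the finite carrier $C_2=f_2^{-1}(C\cap f_2(N))$ of $\omega_2$ supplies the verification of infiniteness that the paper's product lemmas do not cover.
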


\begin{proof}
For $k\ge 3$, let $\omega_1$ be a finite, computable, type~$3$ game
with $\nu(\omega_1)=k$.  
(Such a game exists by Proposition~\ref{nakamura:type3}.)
Let $\omega_2$ be an infinite, computable, monotonic nonweak game (which need not be proper
or strong or nonstrong) with $\nu(\omega_2)\le 3$.  
(Such a game exists by Proposition~\ref{type1ex}.)
Lemmas \ref{product-comp}, \ref{product-monotonic}, \ref{product-proper}, \ref{product-nonstrong}, 
\ref{product-nonweak} imply that
the product $\omega_1\otimes\omega_2$ satisfies the conditions.\end{proof}

\begin{prop} \label{nakamura:type11inf}
For any $k\geq 2$,  there exists an infinite, computable, type~$11$ 
(i.e., nonmonotonic proper nonstrong nonweak) game~$\omega$
with Nakamura number $\nu(\omega)=k$.\end{prop}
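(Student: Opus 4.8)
The plan is to mirror the proof of Proposition~\ref{nakamura:type3inf} and build the desired game as a product $\omega_1\otimes\omega_2$. For $\omega_1$ I would take a finite, computable, type~$11$ game with $\nu(\omega_1)=k$, which exists by Proposition~\ref{nakamura:type11} for every $k\ge 2$. For $\omega_2$ I would take an infinite, computable, nonweak game with the \emph{smallest possible} Nakamura number $\nu(\omega_2)=2$ (for instance the infinite type~$9$ or type~$13$ game constructed in Appendix~B). The reason for insisting on $\nu(\omega_2)=2$, rather than reusing the infinite type~$1$ game of Proposition~\ref{type1ex} (whose Nakamura number is~$3$) as in the type~$3$ case, is that the present statement must also cover $k=2$; choosing $\nu(\omega_2)=2$ yields a single construction valid for all $k\ge 2$.

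I would then read off the type of the product directly from the product lemmas, observing that every defining property of type~$11$ except nonweakness is inherited from $\omega_1$ alone and is thus insensitive to $\omega_2$. Concretely: Lemma~\ref{product-comp} gives computability; the contrapositive of Lemma~\ref{product-monotonic} gives nonmonotonicity (since $\omega_1$ is nonmonotonic); Lemma~\ref{product-proper} gives properness (since $\omega_1$ is proper); Lemma~\ref{product-nonstrong} gives nonstrongness (since $\omega_1$ is nonstrong); and Lemma~\ref{product-nonweak} gives nonweakness (both factors being nonweak) together with $\nu(\omega_1\otimes\omega_2)=\max\{\nu(\omega_1),\nu(\omega_2)\}=\max\{k,2\}=k$. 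The last equality is exactly where the choice $\nu(\omega_2)=2$ pays off, since it leaves the maximum equal to~$k$ for every $k\ge 2$.

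The one point not settled by the cited lemmas---and the step I expect to be the main obstacle---is that the product is genuinely \emph{infinite}, i.e.\ has no finite carrier; this is the part the proof of Proposition~\ref{nakamura:type3inf} leaves implicit. I would argue the contrapositive: if $\omega_1\otimes\omega_2$ had a finite carrier $C$, then, enlarging it to the still finite carrier $C':=C\cup f_1(T)$ (where $T$ is the finite carrier of $\omega_1$ and supersets of carriers are carriers), I would set $C_2:=f_2^{-1}(C'\cap f_2(N))$ and show that the finite set $C_2$ is a carrier of $\omega_2$, contradicting the choice of $\omega_2$. Fixing a winning $W_1\in\omega_1$ with $W_1\subseteq T$ (available because $T$ is a carrier and $\omega_1$ is nonweak, hence nonempty), one computes $C'\cap(W_1*S_2)=W_1*(C_2\cap S_2)$ for every coalition $S_2$, using $f_1(W_1)\subseteq f_1(T)\subseteq C'$, the injectivity of $f_2$, and $f_1(T)\cap f_2(N)=\emptyset$. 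Since $C'$ is a carrier of the product and $W_1\in\omega_1$, the membership characterization $S_1*S_2\in\omega_1\otimes\omega_2 \iff S_1\in\omega_1 \text{ and } S_2\in\omega_2$ then yields $S_2\in\omega_2\iff C_2\cap S_2\in\omega_2$, as required. This is the only portion of the argument that goes beyond a mechanical appeal to the product lemmas.
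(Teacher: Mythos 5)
Your proposal is correct and follows essentially the same route as the paper: the paper likewise forms the product of a finite, computable, type~11 game with $\nu(\omega_1)=k$ (Proposition~\ref{nakamura:type11}) and an infinite, computable game whose Nakamura number is at most~$2$ (the paper takes a nonproper one, e.g.\ an infinite type~13 game, so that $\nu(\omega_2)\le 2$ by Lemma~\ref{nonproper-nakamura2}), and then cites Lemmas~\ref{product-comp}--\ref{product-nonweak} exactly as you do, so that $\nu(\omega_1\otimes\omega_2)=\max\{k,2\}=k$ covers all $k\ge 2$. Your explicit carrier argument showing that the product has no finite carrier fills in a step the paper leaves implicit, and it is sound.
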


\begin{proof}
For $k\ge 2$, let $\omega_1$ be a finite, computable, type~$11$ game
with $\nu(\omega_1)=k$.  
(Such a game exists by Proposition~\ref{nakamura:type11}.)
Let $\omega_2$ be an infinite, computable, nonproper game.
(Types 5, 7, 13, and 15 in \citet{kumabe-m07csg64} are examples.
Alternatively, just for obtaining the results for $k\ge 3$, we can let
$\omega_2$ be an infinite, computable, nonweak game with
$\nu(\omega_2)=3$, which exists by Proposition~\ref{type1ex}.)
Then the game is nonweak, 
with $\nu(\omega_2)=2$ (if $\emptyset\notin\omega_2$; Lemma~\ref{nonproper-nakamura2})
or $\nu(\omega_2)=1$ (otherwise).  
Lemmas \ref{product-comp}, \ref{product-monotonic}, \ref{product-proper}, \ref{product-nonstrong}, 
\ref{product-nonweak} imply that
the product $\omega_1\otimes\omega_2$ satisfies the conditions.\end{proof}

\appendix

\section{An Infinite, Computable, Type~1 Game}
\label{nice_game1}


\emph{We exhibit here an infinite, computable, type~1 (i.e., monotonic proper strong nonweak) simple game},
thus giving a proof to Proposition~\ref{type1ex}.
Though \citet{kumabe-m07csg64} give an example, the readers
not comfortable with recursion theory may find it too complicated.
In view of the fact that such a game is used in an important result (e.g., Proposition~\ref{nakamura:type3inf}) 
in this paper, it makes sense to give a simpler construction here.\footnote{%
One reason that the construction in \citet{kumabe-m07csg64} is complicated is that
they construct a \emph{family} of type~1 games $\omega[A]$, one for each recursive set~$A$,
while requiring \emph{additional conditions} that would later become useful for constructing other types of games.
In this appendix, we construct just one type~1 game, forgetting about the additional conditions.
Some aspects of the construction thus become more transparent in this construction.
The construction extends the one (not requiring the game to be of a particular type) in
\citet[Section~6.2]{kumabe-m08jme}.}

\bigskip

Our approach is to construct recursively enumerable (in fact, recursive) sets $T_0$ and $T_1$ 
of strings (of 0's and 1's) satisfying the conditions of Proposition~\ref{delta0det2}.
We first construct certain sets~$F_s$ of strings for $s\in\{0, 1, 2, \ldots\}$.
We then specify each of $T_0$ and $T_1$ using the sets~$F_s$, 
and construct a simple game $\omega$ according to Proposition~\ref{delta0det2}.
We conclude that the game is computable by checking (Lemmas~\ref{ex:nocarrier-rec}
and \ref{ex:nocarrier-det}) 
that $T_0$ and $T_1$ satisfy the conditions of Proposition~\ref{delta0det2}.
Finally, we show (Claims~\ref{comp:monotonic}, \ref{comp:properstrong}, and~\ref{comp:nonweakcarrier}) 
that the game satisfies the desired properties.

\medskip

\textbf{Notation}.  
Let $\alpha$ and $\beta$ be strings (of 0's and 1's).

Then $\alpha^c$ denotes the string of the length $|\alpha|$ such that $\alpha^c(i)=1-\alpha(i)$ for each $i<|\alpha|$; 
for example, $0110100100^c=1001011011$.
Occasionally, a string $\alpha$ is identified with the set $\{i: \alpha(i)=1\}$.
(Note however that $\alpha^c$ is occasionally identified with the set $\{i: \alpha(i)=0\}$,
but never with the set $\{i: \alpha(i)=1\}^c$.) 

$\alpha\beta$ (or $\alpha*\beta$) denotes the concatenation of $\alpha$ followed by $\beta$.

$\alpha\subseteq \beta$ means that $\alpha$ is an initial segment of $\beta$ ($\beta$ extends $\alpha$);
$\alpha \subseteq A$ means that $\alpha$ is an initial segment of a set~$A$.

Strings $\alpha$ and $\beta$ are \textbf{incompatible} if neither
$\alpha\subseteq \beta$ nor $\beta\subseteq\alpha$
(i.e., there is $k< \min\{|\alpha|,|\beta|\}$ such that $\alpha(k)\neq \beta(k)$).\enspace$\|$

\bigskip

Let $\{k_s\}_{s=0}^\infty$ be an effective listing (recursive enumeration) of the members of 
the recursively enumerable set $\{k : \varphi_k(k)\in \{0,1\}\}$, 
where $\varphi_k(\cdot)$ is the $k$th partial recursive function of one variable
(it is computed by the Turing program with code (G\"{o}del) number~$k$).
We can assume that $k_0\ge 2$ and all the elements $k_s$ are distinct.
Thus, 
\[ \CRec \subset \{k : \varphi_k(k)\in \{0,1\}\} = \{k_0, k_1, k_2, \ldots\}, \]
where $\CRec$ is the set of characteristic indices for recursive sets.

Let $l_{0}=k_0+1$, and for $s>0$, let $l_{s}=\max \{l_{s-1}, k_{s}+1\}$.
We have $l_s\geq l_{s-1}$ (that is, $\{l_s\}$ is an nondecreasing sequence of numbers) and 
$l_s>k_s$ for each $s$.  Note also that $l_s\geq l_{s-1}>k_{s-1}$, $l_s\geq l_{s-2}>k_{s-2}$, etc.\ 
imply that $l_s> k_s$, $k_{s-1}$, $k_{s-2}$, \ldots, $k_0$.

For each $s$, let $F_s$ be the finite set of strings $\alpha=\alpha(0)\alpha(1)\cdots\alpha(l_s-1)$
of length $l_s\ge 3$ such that  %
\begin{equation} \label{ex:nocarrier1}
\textrm{$\alpha(k_{s})=\varphi_{k_{s}}(k_{s})$ and for each $s'<s$,  $\alpha(k_{s'})=1-\varphi_{k_{s'}}(k_{s'})$.}
\end{equation}
Note that (\ref{ex:nocarrier1}) imposes no constraints on $\alpha(k)$ for $k\notin\{k_0,k_1,k_2, \ldots, k_s\}$, 
while it actually imposes constraints for all $k$ in the set, 
since $|\alpha|=l_s> k_s$, $k_{s-1}$, $k_{s-2}$, \ldots, $k_0$.
We observe that if $\alpha\in F_s\cap F_{s'}$, then $s=s'$.
 Let $F=\bigcup_{s}F_s$.

\begin{lemma} \label{Fincompatible}
Any two distinct elements $\alpha$ and $\beta$ in $F$ are \emph{incompatible}.  That is, 
we have neither $\alpha\subseteq \beta$ ($\alpha$ is an initial segment of~$\beta$)
nor $\beta\subseteq\alpha$ 
(i.e., there is $k< \min\{|\alpha|,|\beta|\}$ such that $\alpha(k)\neq \beta(k)$).
\end{lemma}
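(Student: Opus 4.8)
The plan is to fix representatives $\alpha\in F_s$ and $\beta\in F_{s'}$ (every element of $F$ lies in some $F_s$, and by the observation preceding the lemma the index is in fact unique) and to split into two cases according to whether $s=s'$. The case $s=s'$ is immediate: both strings then have the common length $l_s$, so $|\alpha|=|\beta|$. Since $\alpha\neq\beta$, neither can be an initial segment of the other (an initial segment of equal length would coincide with the whole string), and they must disagree at some coordinate $k<l_s=\min\{|\alpha|,|\beta|\}$. That is precisely incompatibility, so this case needs no further work.

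The substance is the case $s\neq s'$, where I would assume without loss of generality that $s<s'$. By the monotonicity of the sequence $\{l_s\}$ we then have $|\alpha|=l_s\le l_{s'}=|\beta|$, so that $\min\{|\alpha|,|\beta|\}=l_s$. The key idea is that the single ``diagonal'' coordinate $k_s$ already witnesses the incompatibility. The defining condition~(\ref{ex:nocarrier1}) for $F_s$ gives $\alpha(k_s)=\varphi_{k_s}(k_s)$; and because $s<s'$, the index $s$ is one of the earlier indices $s''<s'$ appearing in the condition defining $F_{s'}$, so $\beta(k_s)=1-\varphi_{k_s}(k_s)$. Since $\varphi_{k_s}(k_s)\in\{0,1\}$, the two values $\varphi_{k_s}(k_s)$ and $1-\varphi_{k_s}(k_s)$ are distinct, whence $\alpha(k_s)\neq\beta(k_s)$.

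The one point that must be checked with care—and the only place the construction of the $l_s$ is actually invoked—is that $k_s$ lies strictly below the common length, so that the disagreement genuinely certifies incompatibility rather than merely recording a difference past where the shorter string ends. Here I would use the inequality $l_s>k_s$ recorded when the $l_s$ were defined: it yields $k_s<l_s=|\alpha|\le|\beta|$, hence $k_s<\min\{|\alpha|,|\beta|\}$, as required by the definition of incompatibility. I do not expect any genuinely hard step; the whole argument is essentially a bookkeeping verification that the choice of the $l_s$ places each diagonal coordinate $k_s$ safely inside every string belonging to $F_{s'}$ for $s'\ge s$, and the design of~(\ref{ex:nocarrier1})—forcing the value at $k_s$ one way in $F_s$ and the opposite way in every later $F_{s'}$—does the rest.
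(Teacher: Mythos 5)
Your proof is correct and follows essentially the same route as the paper's: both reduce to the observation that if $\alpha\in F_s$, $\beta\in F_{s'}$ with $s<s'$, then condition~(\ref{ex:nocarrier1}) forces $\alpha(k_s)=\varphi_{k_s}(k_s)$ while $\beta(k_s)=1-\varphi_{k_s}(k_s)$, with $k_s<l_s$ guaranteeing the disagreement occurs below both lengths. The only cosmetic difference is that you split cases by index ($s=s'$ versus $s\neq s'$) while the paper splits by length ($|\alpha|=|\beta|$ versus $|\alpha|<|\beta|$); both dispose of the first case by noting that distinct strings of equal length must already disagree somewhere.
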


\begin{proof} 
Let $|\alpha|\leq |\beta|$, without loss of generality.  
 If $\alpha$ and $\beta$ have the same length, then the 
 conclusion follows since otherwise they become identical strings.
 If $l_s=|\alpha|< |\beta|=l_{s'}$, then $s<s'$ and by (\ref{ex:nocarrier1}),
 $\alpha(k_{s})=\varphi_{k_{s}}(k_{s})$ on the one hand, but 
$\beta(k_{s})=1-\varphi_{k_{s}}(k_{s})$ on the other hand.  So $\alpha(k_{s})\neq \beta(k_{s})$.\end{proof}

The game $\omega$ will be constructed from the sets $T_0$ and $T_1$ of strings defined as follows
($10=1*0$, $00=0*0$, and $11=1*1$ below):
\begin{eqnarray*}
\alpha\in T_0^0 & \iff & \exists s \, \textrm{[$\alpha\in F_s$, $\alpha\supseteq 10$,
 	and $\alpha(k_{s})(=\varphi_{k_{s}}(k_{s}))=0$]} \\
\alpha\in T_1^0 & \iff & \exists s \, \textrm{[$\alpha\in F_s$, $\alpha\supseteq 10$, 
	and $\alpha(k_{s})(=\varphi_{k_{s}}(k_{s}))=1$]}  \\
\alpha\in T_0 & \iff & \textrm{[$\alpha \in T_0^0$ or $\alpha^c\in T_1^0$ or $\alpha=00$]} \\
\alpha\in T_1 & \iff & \textrm{[$\alpha \in T_1^0$ or $\alpha^c\in T_0^0$ or $\alpha=11$]}.
\end{eqnarray*}
We observe that  the sets $T_0^0$, $T_1^0$, $T_0$, $T_1$ consist of strings whose lengths are at least~2,
$T_0^0\subset T_0$, $T_1^0\subset T_1$, $T_0\cap T_1=\emptyset$, and 
$\alpha\in T_0 \Leftrightarrow \alpha^c\in T_1$.

\emph{Define $\omega$ by $S\in \omega$ if and only if $S$ has an initial segment in $T_1$}.
Lemmas~\ref{ex:nocarrier-rec} 
and \ref{ex:nocarrier-det} establish computability of $\omega$ 
(as well as the assertion that $T_0$ consists of losing determining strings and
 $T_1$ consists of winning determining strings)
by way of Proposition~\ref{delta0det2}.

\begin{lemma} \label{ex:nocarrier-rec}
$T_0$ and $T_1$ are recursive.
\end{lemma}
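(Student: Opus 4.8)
The plan is to reduce the whole statement to deciding membership in the two auxiliary sets $T_0^0$ and $T_1^0$. Indeed, $T_0$ and $T_1$ are obtained from these by the string operation $\alpha\mapsto\alpha^c$ (which is plainly computable on finite strings) together with disjunctions involving the two fixed strings $00$ and $11$; so once $T_0^0$ and $T_1^0$ are shown recursive, $T_0$ and $T_1$ are recursive as well. First I would note that each of $T_0^0$, $T_1^0$ is defined by an existential quantifier over $s$ (the clause $\exists s\,[\,\cdots]$), which on its face yields only recursive \emph{enumerability}. The entire content of the lemma is to replace this unbounded search by a bounded one.

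The key step is a length bound on the indices $s$ that can possibly matter for a given string. Since $l_s=\max_{s'\le s}(k_{s'}+1)$ and $k_0,\ldots,k_s$ are $s+1$ \emph{distinct} natural numbers, their maximum is at least $s$, whence $l_s\ge s+1$. Because every string in $F_s$ has length exactly $l_s$, membership $\alpha\in F_s$ forces $|\alpha|=l_s$ and therefore $s<|\alpha|$. Consequently, to test whether a string $\alpha$ of length $n$ lies in $F_s$ for some $s$, it suffices to examine the finitely many indices $s<n$. For each such $s$ the value $k_s$ comes from the effective enumeration, and $\varphi_{k_s}(k_s)\in\{0,1\}$ halts by the very definition of the enumerated set, so $l_s$ is computable; one then checks $l_s=n$ and, when it holds, the defining conditions~(\ref{ex:nocarrier1}), namely $\alpha(k_s)=\varphi_{k_s}(k_s)$ and $\alpha(k_{s'})=1-\varphi_{k_{s'}}(k_{s'})$ for $s'<s$, all of which are decidable.

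With this in hand the decision procedure for $T_0^0$ is immediate: given $\alpha$ of length $n$, check whether $\alpha\supseteq 10$ and whether there is some $s<n$ with $l_s=n$, $\alpha\in F_s$, and $\alpha(k_s)=0$; accept if and only if such an $s$ exists. The procedure for $T_1^0$ is identical with $\alpha(k_s)=1$ in place of $\alpha(k_s)=0$. Both searches are finite, so $T_0^0$ and $T_1^0$ are recursive, and the reduction described in the first paragraph then delivers the recursiveness of $T_0$ and $T_1$.

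I expect the only genuine obstacle to be the passage from recursive enumerability to recursiveness, i.e.\ justifying that the existential search over $s$ can be bounded; everything else is routine bookkeeping. The cleanest justification is the inequality $l_s\ge s+1$, which I would single out explicitly, since it is exactly what lets me replace $\exists s$ by $\exists s<|\alpha|$. One minor point worth flagging is that the earlier uniqueness observation (if $\alpha\in F_s\cap F_{s'}$ then $s=s'$) is \emph{not} needed for decidability—a bounded finite search suffices in any case—though it does guarantee that the witnessing $s$ is unique when it exists.
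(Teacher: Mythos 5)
Your proof is correct and follows essentially the same route as the paper's: both replace the unbounded existential over $s$ by a bounded, effective search, exploiting that strings in $F_s$ have length exactly $l_s$, that the $k_s$ are distinct (so $l_s$ grows without bound), and that $\varphi_{k_s}(k_s)$ halts for every enumerated $k_s$, so each condition in (\ref{ex:nocarrier1}) is decidable. The only cosmetic differences are that you bound the search a priori via $l_s \ge s+1$ (hence $s < |\alpha|$), where the paper instead generates stages until $l_s \ge |\sigma|$ and appeals to the nondecreasingness of $\{l_s\}$, and that you dispose of the prefix cases $00$, $11$, $01$ by the reduction of $T_0$, $T_1$ to $T_0^0$, $T_1^0$ rather than by explicit case analysis.
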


\begin{proof}
We give an algorithm that can decide for each given string $\sigma$ with a length of at least~2 
whether it is in $T_0$ or in $T_1$ or neither.

If $\sigma\supseteq 00$, then $\sigma\notin T_0\cup T_1$ unless $\sigma=00\in T_0$.

If  $\sigma\supseteq 11$, then $\sigma\notin T_0\cup T_1$ unless $\sigma=11\in T_1$.

Suppose $\sigma\supseteq 10$.
In this case, $\sigma\in T_0\cup T_1$ iff $\sigma\in T_0^0\cup T_1^0$.
Generate $k_0$, $k_1$, $k_2$, \ldots, 
compute $l_0$, $l_1$, $l_2$, \ldots, and determine $F_0$, $F_1$, $F_2$, \ldots
until we find the least $s$ such that $l_s\geq |\sigma|$.

If $l_s > |\sigma|$, then $\sigma\notin F_s$.
Since $l_s$ is nondecreasing in $s$ and $F_s$ consists of strings of length~$l_s$, 
it follows that $\sigma\notin F$, implying $\sigma\notin T_0^0\cup T_1^0$, 
that is, $\sigma\notin T_0\cup T_1$.

If $l_s= |\sigma|$, then check whether $\sigma \in F_s$; this can be 
done since the values of $\varphi_{k_{s'}}(k_{s'})$ for $s'\leq s$ in (\ref{ex:nocarrier1})
are available and $F_s$ determined by time $s$.
If $\sigma\notin F_s$ and $l_{s+1}>l_s$, then $\sigma \notin T_0\cup T_1$ as before.
Otherwise check whether $\sigma \in F_{s+1}$.  
If $\sigma \notin F_{s+1}$ and $l_{s+2}>l_{s+1}=l_s$, then $\sigma \notin T_0\cup T_1$ as before.
Repeating this process, we either get $\sigma\in F_{s'}$ for some $s'$ or 
$\sigma\notin F_{s'}$ for all $s'\in \{s': l_{s'}=l_s\}$.
In the latter case, we have $\sigma \notin T_0\cup T_1$.
In the former case, if $\sigma(k_{s'})=\varphi_{k_{s'}}(k_{s'})=1$, 
then $\sigma \in T_1^0\subset T_1$ by the definitions of $T_1^0$ and $T_1$.
Otherwise $\sigma(k_{s'})=\varphi_{k_{s'}}(k_{s'})=0$, 
and we have $\sigma \in T_0^0\subset T_0$.

Suppose $\sigma\supseteq 01$.  Then $\sigma^c\supseteq 10$.
In this case the algorithm can decide whether $\sigma^c$ is in $T_0^0$ or in $T_1^0$ or neither.
If $\sigma^c\in T_0^0$, then $\sigma\in T_1$.  
If $\sigma^c\in T_1^0$, then $\sigma\in T_0$.
If $\sigma^c\notin T_0^0\cup T_1^0$, then $\sigma\notin T_0\cup T_1$.
\end{proof}

\begin{lemma} \label{t01incompatible}
Let $\alpha$, $\beta$ be distinct strings in $T_0\cup T_1$. Then $\alpha$ and $\beta$ are incompatible.
In particular, if $\alpha\in T_0$ and $\beta\in T_1$, then $\alpha$ and $\beta$ are incompatible.
\end{lemma}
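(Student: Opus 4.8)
The plan is to write $T_0\cup T_1$ out explicitly and then reduce the incompatibility claim to Lemma~\ref{Fincompatible} together with one elementary observation about complementation. That observation is the following: if two strings $\gamma$ and $\delta$ are incompatible, witnessed by some $k<\min\{|\gamma|,|\delta|\}$ with $\gamma(k)\neq\delta(k)$, then $\gamma^c(k)=1-\gamma(k)\neq 1-\delta(k)=\delta^c(k)$ while $|\gamma^c|=|\gamma|$ and $|\delta^c|=|\delta|$; hence $\gamma^c$ and $\delta^c$ are incompatible as well. So complementation preserves both length and incompatibility, and I would record this first.

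Next I would identify the membership of $T_0\cup T_1$. For $\alpha\in F_s$ condition~(\ref{ex:nocarrier1}) forces $\alpha(k_s)=\varphi_{k_s}(k_s)\in\{0,1\}$ (and $k_s<l_s=|\alpha|$, so this bit is defined), whence every $\alpha\in F$ with $\alpha\supseteq 10$ lies in $T_0^0\cup T_1^0$, and conversely $T_0^0\cup T_1^0\subseteq F$. Writing $G:=T_0^0\cup T_1^0=\{\alpha\in F:\alpha\supseteq 10\}$ and unfolding the definitions of $T_0$ and $T_1$, I obtain
\[
 T_0\cup T_1 = G\cup G^c\cup\{00,11\},
\]
where $G^c:=\{\gamma^c:\gamma\in G\}$. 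Every string in $G$ extends $10$, every string in $G^c$ extends $01$, and the last two strings extend $00$ and $11$; so these four parts are pairwise disjoint by their first two bits, and every string involved has length at least~$2$.

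The case analysis is then immediate. If $\alpha$ and $\beta$ lie in different parts, they disagree at position $0$ or position $1$, and since both have length at least~$2$ this disagreement lies below $\min\{|\alpha|,|\beta|\}$, so they are incompatible. If both lie in $G$, incompatibility is exactly Lemma~\ref{Fincompatible}. If both lie in $G^c$, say $\alpha=\gamma^c$ and $\beta=\delta^c$ with distinct $\gamma,\delta\in G\subseteq F$, then $\gamma,\delta$ are incompatible by Lemma~\ref{Fincompatible}, and the complementation fact above transfers incompatibility to $\alpha,\beta$. The parts $\{00\}$ and $\{11\}$ are singletons and so contribute no distinct pairs. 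Finally, the ``in particular'' clause follows because $T_0\cap T_1=\emptyset$ forces any $\alpha\in T_0$ and $\beta\in T_1$ to be distinct, so the general statement applies.

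The main obstacle I anticipate is purely bookkeeping rather than conceptual: correctly verifying that the part $G^c$ of $T_0\cup T_1$ is literally the set of complements of $G$ (so that Lemma~\ref{Fincompatible} can be pushed through complementation), and being careful that all strings involved have length at least~$2$, so that a disagreement in one of the first two positions genuinely certifies incompatibility rather than merely one string extending the other.
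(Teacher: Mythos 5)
Your proof is correct and follows essentially the same route as the paper's: both split $T_0\cup T_1$ according to the first two bits, reduce distinct strings extending $10$ to Lemma~\ref{Fincompatible}, handle strings extending $01$ by complementation, and treat $00$ and $11$ as trivial singleton cases. The only difference is stylistic—the paper argues by contradiction via a coalition $S$ extending both strings (using $S^c$ for the $01$ case), whereas you argue directly at the string level after recording explicitly that complementation preserves incompatibility.
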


\begin{proof}
Suppose $\alpha$ and $\beta$ are compatible.
Then there is a coalition~$S$ extending $\alpha$ and $\beta$.

If $\alpha\supseteq 00$, then $\beta\supseteq 00$.  But there is only one string in $T_0\cup T_1$
that extends $00$; namely, $00$.  So $\alpha=\beta=00$, contrary to the assumption that they are distinct.
The case where $\alpha\supseteq 11$ is similar.

If $\alpha\supseteq 10$, then $\beta\supseteq 10$.  
So we have $\alpha$, $\beta\in T_0^0\cup T_1^0$, which implies that $\alpha$, $\beta\in F$.
By Lemma~\ref{Fincompatible}, $S$ cannot extend both $\alpha$ and $\beta$, a contradiction.

If $\alpha\supseteq 01$, then $\beta\supseteq 01$.  
So we have $\alpha^c$, $\beta^c\in T_1^0\cup T_1^0$, which implies that $\alpha^c$, $\beta^c\in F$.
By Lemma~\ref{Fincompatible}, $S^c$ cannot extend both $\alpha^c$ and $\beta^c$, a contradiction.\end{proof}

\begin{lemma} \label{ex:nocarrier-string}
Let $\alpha\supseteq 10$ be a string of length $l_s$ such that $\alpha(k_s)=\varphi_{k_s}(k_s)$.
Then for some $t\le s$, there is a string $\beta\in F_t$ such that $10 \subseteq \beta \subseteq \alpha$.
\end{lemma}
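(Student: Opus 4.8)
The plan is to exhibit $\beta$ as a suitable initial segment of $\alpha$, choosing its length by searching the diagonal positions $k_0, k_1, \ldots, k_s$. First I would set $t$ to be the least index in $\{0, 1, \ldots, s\}$ for which $\alpha(k_t) = \varphi_{k_t}(k_t)$. This set is nonempty because the hypothesis $\alpha(k_s) = \varphi_{k_s}(k_s)$ already places $s$ in it; hence $t$ is well-defined with $t \le s$. The point of taking the \emph{least} such index is that, by minimality, $\alpha(k_{s'}) \neq \varphi_{k_{s'}}(k_{s'})$ for every $s' < t$, and since all values lie in $\{0, 1\}$ this is precisely $\alpha(k_{s'}) = 1 - \varphi_{k_{s'}}(k_{s'})$ --- exactly the constraints that the definition of $F_t$ imposes for indices below $t$.

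Next I would let $\beta$ be the initial segment of $\alpha$ of length $l_t$. This is legitimate because $t \le s$ and the monotonicity of $\{l_s\}$ give $l_t \le l_s = |\alpha|$. Recalling that $l_t > k_t, k_{t-1}, \ldots, k_0$, every position constrained by the defining condition~(\ref{ex:nocarrier1}) of $F_t$ lies strictly below $l_t$, so $\beta$ agrees with $\alpha$ at each of them. Thus $\beta(k_t) = \varphi_{k_t}(k_t)$ and $\beta(k_{s'}) = 1 - \varphi_{k_{s'}}(k_{s'})$ for $s' < t$, which is exactly $\beta \in F_t$. Finally, $\beta \subseteq \alpha$ holds by construction, and $\beta \supseteq 10$ follows because $l_t \ge 3$ forces $\beta$ to retain the first two bits of $\alpha \supseteq 10$.

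I do not anticipate a serious obstacle. The only step needing care is confirming that passing from $\alpha$ to its length-$l_t$ initial segment preserves all the bits named in the definition of $F_t$; this rests entirely on the inequality $l_t > k_{s'}$ for $s' \le t$, which is recorded just after the definition of the $l_s$. The genuinely useful idea is the choice of $t$ as the smallest agreeing index, which is what makes the negated conditions for $s' < t$ hold for free rather than having to be arranged by hand.
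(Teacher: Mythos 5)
Your proof is correct, and it is in essence the paper's argument in a different guise. The paper proceeds by induction on $s$: the base case $s=0$ gives $\beta=\alpha\in F_0$; in the inductive step, if some $s'<s$ satisfies $\alpha(k_{s'})=\varphi_{k_{s'}}(k_{s'})$, the induction hypothesis is applied to the initial segment $\alpha[l_{s'}]$, and otherwise $\alpha\in F_s$ itself. Unrolling that recursion lands exactly on your $t$: the recursion bottoms out at the least index at which $\alpha$ agrees with the diagonal. Your non-inductive formulation is arguably cleaner, since it names $t$ and $\beta=\alpha[l_t]$ explicitly and makes transparent why the disagreement conditions for $s'<t$ hold ``for free'' (minimality of $t$, plus the fact that $\varphi_{k_{s'}}(k_{s'})\in\{0,1\}$ by the choice of the enumeration $\{k_s\}$, so disagreement means $\alpha(k_{s'})=1-\varphi_{k_{s'}}(k_{s'})$), whereas the induction leaves this bookkeeping implicit. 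Both arguments ultimately rest on the same two facts you cite: $l_t>k_{s'}$ for all $s'\le t$, so passing to the $l_t$-initial segment preserves every bit constrained in the definition of $F_t$; and $l_t\ge 3$, so $\beta$ still extends $10$.
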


\begin{proof}
We proceed by induction on $s$.
If $s=0$, we have $\beta=\alpha\in F_0$ (note that (\ref{ex:nocarrier1}) imposes no constraints on  $\alpha(0)$
and $\alpha(1)$); hence the lemma holds for $s=0$. 
Suppose the lemma holds for $s'<s$. If for some $s'<s$, $\alpha(k_{s'})=\varphi_{k_{s'}}(k_{s'})$,
then by the induction hypothesis, for some $t\leq s'$, the $l_{s'}$-initial segment $\alpha[l_{s'}]$ of $\alpha$
extends a string $\beta\in F_t$.  Hence the conclusion holds for $s$.
Otherwise,  we have for each $s'<s$,  $\alpha(k_{s'})=1-\varphi_{k_{s'}}(k_{s'})$. 
Then by (\ref{ex:nocarrier1}), $\alpha\in F_s$.  Letting $\beta=\alpha$ gives the conclusion.\end{proof}

\begin{lemma} \label{ex:nocarrier-det}
Any coalition $S\in\REC$ has an initial segment in $T_0$ or in $T_1$, but not both.
\end{lemma}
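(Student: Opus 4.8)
The plan is to prove the two halves of the statement separately: \emph{exclusivity} (no coalition has initial segments in both $T_0$ and $T_1$) and \emph{existence} (every recursive coalition has at least one initial segment in $T_0\cup T_1$). The exclusivity half should follow almost immediately from results already in hand, while the existence half is where the recursion-theoretic content of the construction is used.

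For exclusivity I would argue directly. Suppose $\alpha\subseteq S$ with $\alpha\in T_0$ and $\beta\subseteq S$ with $\beta\in T_1$. Since $T_0\cap T_1=\emptyset$ we have $\alpha\neq\beta$; but $\alpha$ and $\beta$, being initial segments of the same coalition $S$, are compatible. This contradicts Lemma~\ref{t01incompatible}, which states that any two distinct strings in $T_0\cup T_1$ are incompatible. Hence no such pair exists, and $S$ cannot meet both $T_0$ and $T_1$.

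For existence I would split on the two-bit initial segment $S[2]$. If $S\supseteq 00$ then $00\in T_0$ is an initial segment of $S$, and if $S\supseteq 11$ then $11\in T_1$ works, so both of these are immediate. The substantive case is $S\supseteq 10$. Here I would invoke the hypothesis $S\in\REC$: $S$ has a characteristic index $c$ (so $\varphi_c$ is the characteristic function of $S$), whence $\varphi_c(c)=S(c)\in\{0,1\}$ and therefore $c$ appears in the enumeration, say $c=k_s$. Consequently $S(k_s)=\varphi_{k_s}(k_s)$, so the initial segment $\alpha=S[l_s]$ satisfies $\alpha\supseteq 10$, $|\alpha|=l_s$, and $\alpha(k_s)=\varphi_{k_s}(k_s)$. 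Lemma~\ref{ex:nocarrier-string} then yields a string $\beta\in F_t$ for some $t\le s$ with $10\subseteq\beta\subseteq\alpha\subseteq S$. Since $\beta(k_t)=\varphi_{k_t}(k_t)\in\{0,1\}$, the string $\beta$ lies in $T_0^0$ or $T_1^0$, hence in $T_0\cup T_1$, giving the required initial segment of $S$. The last case $S\supseteq 01$ reduces to the $10$ case by passing to $S^c\supseteq 10$: the argument just given produces $\gamma\subseteq S^c$ with $\gamma\in T_0^0\cup T_1^0$, and then $\gamma^c\subseteq S$ lies in $T_1$ or $T_0$ respectively, using the complementation rule $\alpha\in T_0\Leftrightarrow\alpha^c\in T_1$. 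Combined with exclusivity, this establishes the lemma and, via Proposition~\ref{delta0det2}, the computability of $\omega$.

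I expect the main obstacle to be precisely the appeal to recursiveness in the $10$ case, namely guaranteeing that some index $s$ has $S(k_s)=\varphi_{k_s}(k_s)$. This is exactly where the assumption $S\in\REC$ is indispensable: a non-recursive coalition could in principle satisfy $S(k_s)=1-\varphi_{k_s}(k_s)$ at every $k_s$, so that no initial segment of $S$ would ever land in $F$, and the existence claim would fail. Once a single such index is located, the remaining work of trimming $\alpha=S[l_s]$ down to an honest member of some $F_t$ is handled entirely by Lemma~\ref{ex:nocarrier-string}, so the proof should be short beyond that key observation.
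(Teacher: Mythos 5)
Your proof is correct and follows essentially the same route as the paper's: exclusivity via Lemma~\ref{t01incompatible}, the case split on $S[2]$, and in the key case $S\supseteq 10$ the use of recursiveness to land the characteristic index among the $k_s$ and then Lemma~\ref{ex:nocarrier-string} to extract a member of $F_t\subseteq T_0^0\cup T_1^0$, with the $S\supseteq 01$ case handled by complementation. No gaps; your closing remark about why $S\in\REC$ is indispensable matches the construction's intent exactly.
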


\begin{proof}
We show that $S$~has an initial segment in $T_0\cup T_1$. 
Lemma~\ref{t01incompatible} implies that
 $S$ does not have initial segments in both $T_0$ and $T_1$.
(The assertion following ``In particular'' in Lemma~\ref{t01incompatible} is sufficient for this,
but we can actually show the stronger statement that $S$ has exactly one initial segment in $T_0\cup T_1$.)

The conclusion is obvious if $S\supseteq 00$ or $S\supseteq 11$.

If $S\supseteq 10$, suppose $\varphi_k$ is the characteristic function for~$S$. 
Then $k\in\{k_0,k_1,k_2, \ldots\}$ since this set contains the set $\CRec$ of characteristic indices.
So $k=k_s$ for some~$s$.  Consider the initial segment $S[l_s]:=S\cap l_s=\varphi_{k_s}[l_s] \supseteq 10$.  
By Lemma~\ref{ex:nocarrier-string}, for some $t\le s$, there is a string $\beta\in F_t$ such that 
$10 \subseteq \beta \subseteq S[l_s]$.
The conclusion follows since $\beta$  is an initial segment of $S$ and 
$\beta\in T_0^0\cup T_1^0\subset T_0\cup T_1$.

If $S\supseteq 01$, then $S^c\supseteq 10$ has an initial segment $\beta\in T_0^0\cup T_1^0$
by the argument above.  So, $S$ has the initial segment $\beta^c\in T_1\cup T_0$.\end{proof}

Next, we show that the game~$\omega$ has the desired properties.  Before showing monotonicity, 
we need the following lemma.  For strings $\alpha$ and $\beta$ with $|\alpha|\le |\beta|$, 
we say \emph{$\beta$ properly contains~$\alpha$}
if for each $k<|\alpha|$, $\alpha(k)\leq\beta(k)$ and for some $k'<|\alpha|$,  $\alpha(k')<\beta(k')$; 
we say \emph{$\beta$~is properly contained by~$\alpha$}
 if for each $k<|\alpha|$, $\beta(k)\le \alpha(k)$ and for some $k'<|\alpha|$, $\beta(k')<\alpha(k')$.

\begin{lemma}\label{conainingstrings}
Let $\alpha$ and $\beta$ be strings such that $|\alpha|\le |\beta|$.
\textup{(i)}~If $\alpha\in T_1$ and  $\beta$ properly contains~$\alpha$, then $\beta$ extends a string in~$T_1$.
\textup{(ii)}~If $\alpha\in T_0$ and  $\beta$ is properly contained by~$\alpha$, then $\beta$ extends a string in~$T_0$.
\end{lemma}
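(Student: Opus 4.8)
The plan is to prove part~(i) and then deduce part~(ii) by complementation. For~(ii), recall the fact (recorded when $T_0,T_1$ were defined) that $\alpha\in T_0$ iff $\alpha^c\in T_1$, and note that $\beta$ being properly contained by~$\alpha$ on the first $|\alpha|$ coordinates is exactly the statement that $\beta^c$ properly contains~$\alpha^c$ there (complementation reverses each coordinatewise inequality and preserves lengths). So, granting~(i), $\beta^c$ extends some $\gamma\in T_1$; writing $\gamma=(\beta^c)[\,|\gamma|\,]=(\beta[\,|\gamma|\,])^c$, we get $\gamma^c=\beta[\,|\gamma|\,]\subseteq\beta$ with $\gamma^c\in T_0$, which is what~(ii) asserts. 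Hence it suffices to establish~(i).

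For~(i), I would first split according to why $\alpha\in T_1$. If $\alpha=11$, then $\alpha$ is coordinatewise maximal on its two bits, so no~$\beta$ can properly contain it and the claim is vacuous. Otherwise either $\alpha\in T_1^0$ (so $\alpha\supseteq 10$) or $\alpha^c\in T_0^0$ (so $\alpha\supseteq 01$). In either case $\beta$ properly contains~$\alpha$, so $\beta$ agrees with~$\alpha$ at the coordinate where $\alpha$ has a~$1$ among its first two bits; if in addition $\beta$ carries a~$1$ in the remaining coordinate, then $\beta\supseteq 11$, and since $11\in T_1$ we are done immediately. The remaining possibility is that $\beta$ keeps the prefix of~$\alpha$, i.e.\ $\beta\supseteq 10$ (resp.\ $\beta\supseteq 01$).

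Consider first $\alpha\in T_1^0$ with $\beta\supseteq 10$, so $\alpha\in F_s$, $\alpha(k_s)=\varphi_{k_s}(k_s)=1$, and $|\alpha|=l_s\le|\beta|$. The key observation is a monotone-flip fact: since $\beta(k)\ge\alpha(k)$ for $k<l_s$, passing from~$\alpha$ to $\beta[l_s]$ can only change bits from~$0$ to~$1$. By~(\ref{ex:nocarrier1}), $\alpha(k_{s'})=1-\varphi_{k_{s'}}(k_{s'})$ for $s'<s$, so any position~$k_{s'}$ at which $\beta[l_s]$ \emph{agrees} with $\varphi_{k_{s'}}(k_{s'})$ must be a flipped bit, forcing $\alpha(k_{s'})=0$ and hence $\varphi_{k_{s'}}(k_{s'})=1$; the position~$k_s$ is likewise an agreement position with $\varphi_{k_s}(k_s)=1$. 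Thus \emph{every} agreement position of $\beta[l_s]$ carries value~$1$. Now $\beta[l_s]\supseteq 10$ has length~$l_s$ and $\beta[l_s](k_s)=\varphi_{k_s}(k_s)$, so Lemma~\ref{ex:nocarrier-string} yields some $t\le s$ and a string $\beta[l_t]\in F_t$ with $10\subseteq\beta[l_t]\subseteq\beta$. Since membership in~$F_t$ makes $k_t$ an agreement position, $\varphi_{k_t}(k_t)=1$ by the observation, whence $\beta[l_t]\in T_1^0\subseteq T_1$ and $\beta$ extends it.

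The case $\alpha^c\in T_0^0$ with $\beta\supseteq 01$ is the mirror image: here $\beta^c\supseteq 10$ satisfies $\beta^c(k)\le\alpha^c(k)$ for $k<l_s$, so bits change only from~$1$ to~$0$, and the same reasoning (now with $\alpha^c(k_{s'})=1-\varphi_{k_{s'}}(k_{s'})$) forces every agreement position of $\beta^c[l_s]$ to carry value~$0$. Lemma~\ref{ex:nocarrier-string} applied to $\beta^c[l_s]$ produces $\beta^c[l_t]\in F_t$ with $10\subseteq\beta^c[l_t]$ and $\varphi_{k_t}(k_t)=0$, so $\beta^c[l_t]\in T_0^0\subseteq T_0$; taking complements gives $\beta[l_t]=(\beta^c[l_t])^c\in T_1$, and $\beta$ extends it. The main obstacle throughout is precisely this monotone-flip bookkeeping---verifying that increasing (resp.\ decreasing) the bits can only create agreements with~$\varphi$ at positions of value~$1$ (resp.\ $0$), so that the string supplied by Lemma~\ref{ex:nocarrier-string} lands in $T_1^0$ (resp.\ $T_0^0$) rather than the wrong half---together with checking that the hypotheses of Lemma~\ref{ex:nocarrier-string} (prefix~$10$, length~$l_s$, agreement at~$k_s$) really hold for $\beta[l_s]$ and $\beta^c[l_s]$.
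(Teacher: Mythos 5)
Your proof is correct, and its skeleton coincides with the paper's: the same case split on why $\alpha\in T_1$ (namely $\alpha=11$, $\alpha\in T_1^0$, or $\alpha^c\in T_0^0$), the same dispatch of the subcase $\beta\supseteq 11$ via $11\in T_1$, and the same reduction of part~(ii) to part~(i) by complementation using $\alpha\in T_0\Leftrightarrow\alpha^c\in T_1$. The one genuine difference is the core step when $\beta\supseteq 10$ (resp.\ $\beta^c\supseteq 10$). The paper constructs the witness directly: it chooses the \emph{least} $s'\le s$ with $\beta(k_{s'})=\varphi_{k_{s'}}(k_{s'})=1$ and uses the containment inequality $\beta(k)\ge\alpha(k)$ to rule out any $t<s'$ with $\beta(k_t)=\varphi_{k_t}(k_t)=0$, so that $\beta[l_{s'}]\in F_{s'}\cap T_1^0$ follows straight from condition~(\ref{ex:nocarrier1}). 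You instead apply Lemma~\ref{ex:nocarrier-string} to $\beta[l_s]$ to obtain some $\beta[l_t]\in F_t$, and then use your ``monotone-flip'' observation---any agreement position $k_{t'}$, $t'\le s$, must carry value~$1$, again by $\beta(k)\ge\alpha(k)$---to conclude that this string lands in $T_1^0$ rather than $T_0^0$. The two arguments rest on exactly the same combinatorial fact; yours buys economy by reusing a lemma the paper proved (by induction) for the purposes of Lemma~\ref{ex:nocarrier-det}, at the small cost of checking that lemma's hypotheses for $\beta[l_s]$ and $\beta^c[l_s]$ (prefix~$10$, length~$l_s$, agreement at~$k_s$) and then separately pinning down the correct half, whereas the paper's inline least-index construction produces the witness and its half in one stroke. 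Both case analyses in your write-up, including the mirrored $\alpha^c\in T_0^0$ case and the complementation step for~(ii), are carried out correctly.
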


\begin{proof}
(i)~Suppose $\alpha\in T_1$ and $\beta$ properly contains~$\alpha$.
We have $\alpha=11$ or $\alpha\in T_1^0$ or $\alpha^c\in T_0^0$.

If $\alpha=11$, no $\beta$ properly contains~$\alpha$.

Suppose $\alpha\in T_1^0$.   Then $\alpha\in F_s$ for some~$s$.
Since $\beta$ properly contains~$\alpha\supseteq 10$, we have
$\beta\supseteq 11$ or $\beta\supseteq 10$.
If $\beta\supseteq 11$, the conclusion follows since $11\in T_1$.
Otherwise, $\beta\supseteq 10$; 
choose the least $s'\leq s$ such that $\beta(k_{s'})=\varphi_{k_{s'}}(k_{s'})=1$.
(Such an $s'$ exists since $\alpha\in T_1^0$ implies $\alpha(k_{s})=\varphi_{k_{s}}(k_{s})=1$.  
Note that $k_{s'}< l_{s'}\le l_s=|\alpha|$.)
Then for each $t<s'$, we have $\beta(k_t)=1-\varphi_{k_t}(k_t)$.  
(\emph{Details}.  By the choice of~$s'$, for each $t<s'$, either (a)~$\beta(k_t)=\varphi_{k_t}(k_t)=0$ or
(b)~$\beta(k_t)\neq \varphi_t(k_t)$.  
Suppose~(a) for some $t<s'$. 
Since $\alpha\in F_s$, we have for each $t<s$, $\alpha(k_t)=1-\varphi_{k_t}(k_t)$ by (\ref{ex:nocarrier1}).  Then we have $\beta(k_t)=0$ and $\alpha(k_t)=1$, contradicting the assumption that
$\beta$ properly contains~$\alpha$.)
The conclusion follows since the initial segment $\beta[l_{s'}]$ is in~$T_1^0$.

Suppose $\alpha^c\in T_0^0$.   Then $\alpha^c\in F_s$ for some~$s$.
Since $\beta^c$ is properly contained in~$\alpha^c\supseteq 10$, we have
$\beta^c\supseteq 00$ or $\beta^c\supseteq 10$.
If $\beta^c\supseteq 00$, the conclusion follows since $\beta\supseteq 11\in T_1$.
Otherwise, $\beta^c\supseteq 10$;
Choose the least $s'\leq s$ such that $\beta^c(k_{s'})=\varphi_{k_{s'}}(k_{s'})=0$.
Then for each $t<s'$, we have $\beta^c(k_t)=1-\varphi_{k_t}(k_t)$ as before.
Therefore, the initial segment $\beta^c[l_{s'}]$ is in~$T_0^0$.
The conclusion follows since $\beta[l_{s'}]\in T_1$.

(ii)~Suppose $\alpha\in T_0$ and  $\beta$ is properly contained by~$\alpha$.
Then $\alpha^c\in T_1$ and  $\beta^c$ properly contains~$\alpha^c$.
Assertion~(i) then implies that $\beta^c$ extends a string $\beta^c[l_{s'}]$ in~$T_1$.
Therefore, $\beta$ extends the string $\beta[l_{s'}]$ in~$T_0$.\end{proof}

\begin{claim}\label{comp:monotonic}
The game~$\omega$ is monotonic.
 \end{claim}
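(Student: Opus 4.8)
The plan is to show that the game $\omega$, defined by $S\in\omega$ iff $S$ has an initial segment in $T_1$, satisfies the monotonicity condition: if $S\in\omega$ and $S\subseteq S'$, then $S'\in\omega$. First I would take $S\in\omega$, so $S$ has an initial segment $\alpha\in T_1$ with $\alpha=S\cap|\alpha|$. I want to produce an initial segment of $S'$ lying in $T_1$. The natural tool is Lemma~\ref{conainingstrings}(i), which says that if $\alpha\in T_1$ and a longer string $\beta$ \emph{properly contains} $\alpha$ (coordinatewise $\geq$, with strict inequality somewhere), then $\beta$ extends a string in $T_1$.

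The key step is to set up the right $\beta$ from the inclusion $S\subseteq S'$. Since $S\subseteq S'$, at every coordinate $k$ we have $\alpha(k)=S(k)\le S'(k)$. So if I let $\beta=S'\cap|\alpha|$ be the initial segment of $S'$ of the same length $|\alpha|$, then $\beta$ contains $\alpha$ coordinatewise. There are two cases. If $\beta=\alpha$, then $\beta=S'\cap|\alpha|\in T_1$ is already an initial segment of $S'$, so $S'\in\omega$ immediately. Otherwise $\beta$ \emph{properly} contains $\alpha$ (some coordinate strictly increased), and $|\beta|=|\alpha|$, so Lemma~\ref{conainingstrings}(i) applies to give that $\beta$ extends a string $\gamma\in T_1$; since $\beta$ is itself an initial segment of $S'$, so is $\gamma$, whence $S'\in\omega$.

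I expect the main obstacle to be purely a matter of bookkeeping with the definition of ``properly contains'' and making sure the lengths match so that Lemma~\ref{conainingstrings}(i) is applicable (it requires $|\alpha|\le|\beta|$, which here holds with equality). One subtlety worth checking is the degenerate case $\alpha=\beta$, which is not covered by the lemma (the lemma needs \emph{strict} containment somewhere) but is trivial directly. A second point to verify is that Lemma~\ref{conainingstrings}(i) yields a string in $T_1$ that is genuinely an initial segment of $\beta$ (and hence of $S'$), which is exactly what ``extends a string in $T_1$'' means. Beyond these routine verifications, no real difficulty arises: essentially all the work has been front-loaded into Lemma~\ref{conainingstrings}, so the proof of the claim is short, reducing monotonicity of $\omega$ to the coordinatewise-containment lemma applied to matched-length initial segments of $S$ and $S'$.
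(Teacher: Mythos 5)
Your proof is correct and follows essentially the same route as the paper: the paper also takes the length-$|\alpha|$ initial segment of the larger coalition, splits into the case where it equals $\alpha$ (immediate) and the case where it properly contains $\alpha$, and then invokes Lemma~\ref{conainingstrings}(i) to produce an initial segment in $T_1$. The bookkeeping points you flag (equal lengths, the degenerate case $\beta=\alpha$) are handled in the paper in exactly the same way.
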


\begin{proof}
Suppose $A\in \omega$ and $B\supseteq A$.  
By the definition of $\omega$, $A$ has an initial segment $\alpha$ in~$T_1$.
If $B$ extends~$\alpha$, then clearly $B\in \omega$.
Otherwise the $|\alpha|$-initial segment $\beta=B[|\alpha|]$ of $B$ properly contains~$\alpha$.
By Lemma~\ref{conainingstrings}, $\beta$ extends a string in~$T_1$. 
Hence $B$ has an initial segment in~$T_1$, implying that $B\in \omega$.\end{proof}
 
\begin{claim}\label{comp:properstrong}
The game~$\omega$ is proper and strong.
\end{claim}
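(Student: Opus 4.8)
The claim asserts that the game $\omega$ constructed from $T_0$ and $T_1$ is both \emph{proper} and \emph{strong}. The key structural fact I would exploit is the symmetry built into the definitions, namely that $\alpha\in T_0 \Leftrightarrow \alpha^c\in T_1$ (established just after the definitions), together with the fundamental property (Lemma~\ref{ex:nocarrier-det}) that every recursive coalition $S$ has \emph{exactly one} initial segment in $T_0\cup T_1$, lying in precisely one of $T_0$ or $T_1$. Recall that $S\in\omega$ iff $S$ has an initial segment in $T_1$; equivalently (by Lemma~\ref{ex:nocarrier-det}), $S\notin\omega$ iff $S$ has an initial segment in $T_0$.

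The plan is to prove both properties simultaneously by relating the initial segments of $S$ to those of its complement $S^c$. First I would observe that if $\alpha$ is the unique initial segment of $S$ lying in $T_0\cup T_1$ (so $|\alpha|=k$ for some $k$), then $\alpha^c$ is exactly the $k$-initial segment of $S^c$, since complementing a coalition complements each bit of every initial segment. Combined with the symmetry $\alpha\in T_1 \Leftrightarrow \alpha^c\in T_0$, this gives the crucial biconditional: the initial segment $\alpha$ of $S$ lies in $T_1$ if and only if the corresponding initial segment $\alpha^c$ of $S^c$ lies in $T_0$.

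From this the two properties follow immediately and in parallel. For \textbf{strongness}, suppose $S\notin\omega$. Then $S$ has an initial segment $\alpha\in T_0$; hence $S^c$ has the initial segment $\alpha^c\in T_1$, so $S^c\in\omega$. For \textbf{properness}, suppose $S\in\omega$. Then $S$ has an initial segment $\alpha\in T_1$, whence $S^c$ has the initial segment $\alpha^c\in T_0$. Since by Lemma~\ref{ex:nocarrier-det} this $\alpha^c$ is the \emph{only} initial segment of $S^c$ in $T_0\cup T_1$, and it lies in $T_0$ rather than $T_1$, the coalition $S^c$ has no initial segment in $T_1$, so $S^c\notin\omega$. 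Thus $\omega$ is proper.

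The step I expect to require the most care is justifying that $\alpha^c$ really is the initial segment of $S^c$ of the same length, and that the uniqueness clause of Lemma~\ref{ex:nocarrier-det} may be invoked to rule out $S^c$ having a \emph{different} initial segment in $T_1$ — this uniqueness is exactly what distinguishes properness (a ``not both winning'' statement) from strongness (a ``not both losing'' statement) and is what prevents the argument from degenerating. Everything else is bookkeeping: the proof is essentially a two-line application of the symmetry $\alpha\in T_0 \Leftrightarrow \alpha^c\in T_1$ once Lemma~\ref{ex:nocarrier-det} is in hand, so no delicate recursion-theoretic construction is needed here, only the combinatorics of initial segments and complementation.
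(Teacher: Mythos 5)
Your proof is correct and takes essentially the same approach as the paper's: the paper proves the single biconditional $S^c\in\omega \iff S^c$ has an initial segment in $T_1 \iff S$ has an initial segment in $T_0 \iff S\notin\omega$, using exactly your two ingredients --- the symmetry $\alpha\in T_0 \Leftrightarrow \alpha^c\in T_1$ together with Lemma~\ref{ex:nocarrier-det}. The only cosmetic differences are that you split the biconditional into the two separate implications (strongness and properness) and that you appeal to the \emph{uniqueness} of the initial segment in $T_0\cup T_1$, where the weaker ``not both'' clause in the statement of Lemma~\ref{ex:nocarrier-det} already suffices.
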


\begin{proof}
It suffices to show that $S^c\in\omega \Leftrightarrow S\notin\omega$.
From the observations that
$T_0$ and $T_1$ consist of determining strings and
that $\alpha^c\in T_0 \Leftrightarrow \alpha \in T_1$, we have
\begin{eqnarray*}
S^c \in\omega & \iff & \textrm{$S^c$ has an initial segment in~$T_1$} \\
						& \iff & \textrm{$S$ has an initial segment in~$T_0$} \\
						& \iff & S\notin\omega.
\end{eqnarray*}\end{proof}

\begin{claim} \label{comp:nonweakcarrier}
The game~$\omega$ is nonweak and does not have a finite carrier.
\end{claim}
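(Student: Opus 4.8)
The plan is to prove two separate assertions: that $\omega$ is nonweak (meaning $\bigcap\omega=\emptyset$), and that $\omega$ has no finite carrier. Both will be established by exploiting the defining property~(\ref{ex:nocarrier1}) of the strings in each $F_s$, which forces a sign-flip at the coordinate $k_s$, together with the fact (from Proposition~\ref{delta0det2} and Lemmas~\ref{ex:nocarrier-rec}, \ref{ex:nocarrier-det}) that $T_1$ consists of winning determining strings and every coalition has an initial segment in $T_0$ or $T_1$.

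For nonweakness, I would exhibit, for each player $j\in N$, a winning coalition not containing~$j$; this shows $\bigcap\omega=\emptyset$. First I would show that there exist winning coalitions at all, and more specifically that both $S$ and (many) other coalitions are winning, using that $11\in T_1$ so any coalition extending $11$ is winning. The key observation is the structure built into the construction: because condition~(\ref{ex:nocarrier1}) forces $\alpha(k_s)=\varphi_{k_s}(k_s)$ while flipping all earlier $\alpha(k_{s'})$, one can arrange winning initial segments that are $0$ at a prescribed coordinate. Concretely, given a target player~$j$, I would locate an index $s$ with $k_s=$ (an index whose self-value is convenient) or otherwise directly build a string $\beta\in T_1^0$ with $\beta(j)=0$, then any extension of $\beta$ that excludes $j$ is a winning coalition omitting~$j$. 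Taking the intersection over all~$j$ yields $\bigcap\omega=\emptyset$, so $\omega$ is nonweak.

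For the absence of a finite carrier, I would argue by contradiction: suppose $T=\{0,1,\ldots,m-1\}$ (or any finite set, which we may assume is an initial segment since supersets of carriers are carriers) is a carrier. Then membership in $\omega$ depends only on coordinates below~$m$. The idea is to find two coalitions $A$ and $B$ agreeing on $\{0,\ldots,m-1\}$ (so $A\cap T=B\cap T$, hence $A\in\omega\iff B\in\omega$ would be forced) but with $A\in\omega$ and $B\notin\omega$, contradicting the carrier property. This is where condition~(\ref{ex:nocarrier1}) does the essential work: because the determining strings in $F_s$ have length $l_s>k_s$ and the relevant constraint sits at coordinate $k_s$, which can be made larger than~$m$ by choosing $s$ large, one can construct coalitions whose winning/losing status is decided by a coordinate beyond~$m$. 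Using that the $k_s$ are distinct and that $\{k:\varphi_k(k)\in\{0,1\}\}\supseteq\CRec$ is infinite, I would pick $s$ with $k_s\ge m$ and realize both outcomes at $k_s$ while holding coordinates below~$m$ fixed.

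The main obstacle will be the carrier argument: making precise the claim that the ``decision coordinate'' $k_s$ can be pushed past any finite~$m$ while keeping the two witness coalitions identical on $T$. I must verify that the two coalitions I build actually lie in $\REC$ and actually have their membership in $\omega$ decided at $k_s$ (via Lemmas~\ref{ex:nocarrier-string} and~\ref{ex:nocarrier-det}), rather than at some earlier determining initial segment that already lies in $T_0\cup T_1$. The nonweakness half is comparatively routine once the structure of $T_1^0$ is unpacked, but care is needed to ensure the constructed winning coalitions are genuinely recursive and genuinely exclude the targeted player.
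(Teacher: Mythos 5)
Your outline follows the same skeleton as the paper's proof (witness coalitions agreeing on a long initial segment, with the decision pushed to a coordinate $k_s$ beyond any putative finite carrier), but it stops exactly where the real work begins, and the step you defer as ``verification'' is in fact the one missing idea. Two concrete problems in the carrier half. First, for a \emph{fixed} $s$ you cannot ``realize both outcomes at $k_s$'': condition~(\ref{ex:nocarrier1}) forces every $\alpha\in F_s$ to satisfy $\alpha(k_s)=\varphi_{k_s}(k_s)$, so whether $F_s$ feeds into $T_1^0$ or into $T_0^0$ is dictated by the fixed value $\varphi_{k_s}(k_s)$, not by your choice; you need two indices $s_1\neq s_2$ with $\varphi_{k_{s_1}}(k_{s_1})=1$ and $\varphi_{k_{s_2}}(k_{s_2})=0$ and with $k_{s_1},k_{s_2}$ large (such indices exist, e.g., characteristic indices of $N$ and of $\emptyset$). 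Second, and more seriously, ``holding coordinates below $m$ fixed'' cannot be done arbitrarily: a string in $F_s$ is constrained not only at $k_s$ but at \emph{all} earlier coordinates $k_t$, $t<s$, where it must take the flipped value $1-\varphi_{k_t}(k_t)$---including those $k_t<m$. So your common part below $m$ must be diagonal at every constrained coordinate (otherwise it has no extension in $F_{s_1}$ or $F_{s_2}$ at all), and it must also be nondetermining (otherwise both witnesses' status is already settled below $m$---e.g., any common part extending $11$, $00$, or an $F_t$-string with $l_t\le m$ kills the contradiction). The paper meets both demands simultaneously with one construction your proposal never produces: the diagonal set $A\supseteq 10$ with $A(k_t)=1-\varphi_{k_t}(k_t)$ for every $t$. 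Its initial segments $A[l_{s'}]$ avoid every $F_t$, hence are never determining, yet each extends to a string in $F_{s}\cap T_1^0$ and to a string in $F_{s''}\cap T_0^0$ for suitable large $s$, $s''$. That set is the heart of the proof; without it your plan, as stated, does not go through.

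The nonweakness half also contains a slip: there is no string $\beta\in T_1^0$ with $\beta(0)=0$, since every element of $T_1^0$ extends $10$. So your recipe fails for the target player $j=0$; there you must instead take $\beta^c$ for some $\beta\in T_0^0$ (i.e., a string in $T_1$ extending $01$), and nonemptiness of both $T_1^0$ and $T_0^0$ again requires the observation that both self-values $0$ and $1$ occur among the $\varphi_{k_s}(k_s)$. Once repaired, your per-player argument is essentially the paper's: it exhibits the three winning coalitions $\{0,1\}$ (extending $11$), one extending some $\alpha\in T_1^0\supseteq 10$, and one extending $\beta^c\supseteq 01$, whose intersection is empty.
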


\begin{proof}
To show that the game does not have a finite carrier, we will construct a set~$A$ such that 
for infinitely many~$l$,
the $l$-initial segment $A[l]$ has an extension (as a string) that is winning 
and for infinitely many $l'$, $A[l']$ has an extension that is losing.  
This implies that $A[l]$ is not a carrier of~$\omega$ for any such~$l$.
So no subset of $A[l]$ is a carrier.  Since there are arbitrarily large such~$l$, 
this proves that $\omega$ has no finite carrier.

Let $A\supseteq 10$ be a set such that for each $k_t$, $A(k_t)=1-\varphi_{k_t}(k_t)$. 
For any $s'>0$ and $i \in \{0,1\}$, there is an $s>s'$ such that $k_s> l_{s'}$ and $\varphi_{k_s}(k_s)=i$.

For a temporarily chosen $s'$, fix $i$ and fix such $s$.  Then choose the greatest $s'$ satisfying these conditions.
Since $l_s>k_s>l_{s'}$, there is a string $\alpha$ of length~$l_s$ extending $A[l_{s'}]$ 
such that $\alpha\in F_s$.  
Since $\alpha\supseteq 10$ and $\alpha(k_s)=\varphi_{k_s}(k_s)=i$, we have $\alpha\in T_i^0$.  

There are infinitely many such $s$, so there are infinitely many such $s'$. 
It follows that for infinitely many $l_{s'}$, the initial segment $A[l_{s'}]$ is a substring 
of some  string~$\alpha$ in $T_1$, 
and for infinitely many $l_{s'}$, $A[l_{s'}]$ is a substring of some (losing) string~$\alpha$ in~$T_0$.

To show nonweakness, we give three (winning) coalitions in $T_1$ whose intersection is empty.
First, $10$ (in fact any initial segment of the coalition $A \supseteq 10$)
has extensions $\alpha$ in $T_1$ and $\beta$ in~$T_0$ by the argument above.
So $01$ has the extension~$\beta^c$ in~$T_1$.
Clearly, the intersection of the winning coalitions $11\in  T_1$, $\alpha\supseteq 10$, and 
$\beta^c\supseteq 01$ is empty.\end{proof}







\pagebreak
\pagenumbering{arabic}

\section{Type 9 and Type 13 Games}
\label{sect:9and13}

In this attachment,
we modify the examples of a type~9 game and a type~13 game in \citet{kumabe-m07csg64}
 so that an empty coalition is losing.
To do that, modify the infinite, computable, type~1 game $\omega[A]$ in that paper as follows
((2.i) and (3) refer to certain requirements in that paper):

\begin{enumerate}
\item[9.] An infinite, computable, type~9 (nonmonotonic proper strong nonweak) game.
In the construction of $\omega[A]$, replace (2.i) by 
\begin{enumerate}
\item [(2*.i)] for each p-string $\alpha'\neq 10$ that is a proper substring of $\alpha$, 
	if $s=0$ or $|\alpha'|\geq l_{s-1}$, then enumerate $\alpha'*11$ in $T_1$ and $\alpha'*00$ in~$T_0$;
	furthermore, enumerate $1011$ and $1000$ in $T_0$. 
\end{enumerate}
By (3) of the construction of~$\omega[A]$, $0100, 0111\in T_1$.
(In other words, the game is constructed from the sets $T_0:=T'_0\cup \{1011\}\setminus\{0100\}$  
and $T_1:=T'_1\cup \{0100\}\setminus \{1011\}$, 
where $T'_0$ and $T'_1$ are $T_0$ and $T_1$ in the original construction of $\omega[A]$ renamed.
Note that $1011\in T'_1$, $1000\in T'_0$, $0100\in T'_0$, and $0111\in T'_1$.)
Letting $\alpha'=\emptyset$ in (2*.i), we have $00\in T_0$; so $\emptyset$ is losing.
Since either $\alpha'=1010$ or $1001$ is a p-string satisfying the condition in~(2*.i),
either $101011\in T_1$ or $100111\in T_1$.  
Then by (3), either $010100\in T_0$ or $011000\in T_0$.
So the game is nonmonotonic, since $0100$ is winning.
It is also nonweak since $0100$ is winning and either $101011$ or $100111$ is winning.  
For the remaining properties, the proofs  are similar to the proofs for~$\omega[A]$. 

\item [13.] An infinite, computable, type~13 (nonmonotonic nonproper strong nonweak) game.
In the construction of $\omega[A]$, replace (2.i) and (3) by
\begin{enumerate}
\item [(2*.i)] for each p-string $\alpha'\neq 10$ that is a proper substring of~$\alpha$, 
	if $s=0$ or $|\alpha'|\geq l_{s-1}$, then enumerate $\alpha'*11$ in $T_1$ and $\alpha'*00$ in~$T_0$;
	furthermore, enumerate $1011$ and $0100$ in $T_1$ and $1000$ in $T_0$;
	\item [(3*)] if a string $\beta\notin\{1011,0100\}$ is enumerated in $T_1$ (or in $T_0$) above, 
then enumerate $\beta^c$ in $T_0$ (or in $T_1$, respectively).
\end{enumerate}
By (3*), $0111\in T_1$. 
(In other words, the game is constructed from the sets $T_0:=T'_0\setminus \{0100\}$  
and $T_1:=T'_1\cup \{0100\}$, 
where $T'_0$ and $T'_1$ are $T_0$ and $T_1$ in the original construction of $\omega[A]$ renamed.)

By an argument similar to that for type~9, $\emptyset$ is losing and the game is nonmonotonic
(either $010100\in T_0$ or $011000\in T_0$, while $0100$ is winning).
It is nonproper since the $0100$ and $1011$ are winning determining.
It is strong since its subset $\omega[A]$ is strong.
It is nonweak by Lemma~\ref{weakisproper} since it is nonproper.
The proofs of computability and nonexistence of a finite carrier are similar to the proofs for~$\omega[A]$.
\end{enumerate}

 \end{document}